\documentclass[prx,twocolumn,superscriptaddress]{revtex4-1}
\pdfoutput=1
\usepackage[utf8]{inputenc}
\usepackage{amsmath, amssymb,dsfont}
\usepackage{verbatim}
\usepackage[usenames,dvipsnames]{xcolor}
\usepackage{graphicx}
\usepackage[pdftex, plainpages=false]{hyperref}
\usepackage{mathtools}
\usepackage{stmaryrd, ifthen}
\usepackage{array}
\usepackage{tabularx,float}
\usepackage[export]{adjustbox}
\usepackage{bm,wasysym}
\usepackage{enumitem}
\usepackage{calc}
\usepackage{etoolbox}
\mathtoolsset{showonlyrefs=false}
\usepackage{footnote}
\usepackage{dcolumn}
\usepackage{microtype}
\usepackage{booktabs}
\usepackage{epstopdf}

\usepackage{euscript}

\definecolor{nblue}{rgb}{0.2,0.2,0.7}
\definecolor{ngreen}{rgb}{0.2,0.6,0.2}
\definecolor{nred}{rgb}{0.7,0.2,0.2}
\definecolor{nblack}{rgb}{0,0,0}


\usepackage{tikz}
\usetikzlibrary{positioning,intersections}
\usetikzlibrary{calc}
\usetikzlibrary{shapes.geometric}
\usepackage{braids}
\usepackage{tikz-cd}

\makeatletter
\newsavebox{\@brx}
\newcommand{\llangle}[1][]{\savebox{\@brx}{\(\m@th{#1\langle}\)}%
  \mathopen{\copy\@brx\kern-0.5\wd\@brx\usebox{\@brx}}}
\newcommand{\rrangle}[1][]{\savebox{\@brx}{\(\m@th{#1\rangle}\)}%
  \mathclose{\copy\@brx\kern-0.5\wd\@brx\usebox{\@brx}}}
\makeatother

\hypersetup{
    colorlinks,
    linkcolor={gray!80!black},
    citecolor={blue!90!black},
    urlcolor={gray!60!black}
}

\makeatletter

\usepackage{amsthm}


\theoremstyle{remark}
\newtheorem{definition}{Definition}
\newtheorem{example}[definition]{Example}

\newtheorem{remark}[definition]{Remark}
\newtheorem{theorem}[definition]{Theorem}
\newtheorem{corollary}[definition]{Corollary}
\newtheorem{principle}{Principle}

\newtheorem{observation}[definition]{Observationn}


\newcommand\Tr   {\operatorname{Tr}}
\newcommand{\one}{\mathds{I}}

\newcommand{\Acal}{\mathcal{A}}
\newcommand{\Bcal}{\mathcal{B}}

\newcommand{\Xcal}{\mathcal{X}}
\newcommand{\Ycal}{\mathcal{Y}}
\newcommand{\PDO}{\mathbf{PDO}}
\newcommand{\QPC}{\mathbf{QPC}}
\newcommand\Herm{\mathbf{Herm}}
\newcommand{\HPTP}{\mathbf{HPTP}}

\newcommand{\orcidd}[1]{\href{https://orcid.org/#1}{\includegraphics[width=8pt]{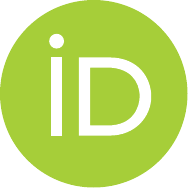}}}



\begin{document}

\title{Quantum space-time marginal problem: global causal structure from local causal information}

\author{Zhian Jia\orcidd{0000-0001-8588-173X}}
\email{giannjia@foxmail.com}
\affiliation{Centre for Quantum Technologies, National University of Singapore, Singapore 117543, Singapore}
\affiliation{Department of Physics, National University of Singapore, Singapore 117543, Singapore}

\author{Minjeong Song}
\email{song.at.qit@gmail.com}
\affiliation{Nanyang Quantum Hub, School of Physical and Mathematical Sciences, Nanyang Technological University, Singapore 637371, Singapore}

\author{Dagomir Kaszlikowski}
\email{phykd@nus.edu.sg}
\affiliation{Centre for Quantum Technologies, National University of Singapore, Singapore 117543, Singapore}
\affiliation{Department of Physics, National University of Singapore, Singapore 117543, Singapore}

\begin{abstract}
Spatial and temporal quantum correlations can be unified in the framework of the pseudo-density operators, and quantum causality between the involved events in an experiment is encoded in the corresponding pseudo-density operator. We study the relationship between local causal information and global causal structure.
A space-time marginal problem is proposed to infer global causal structures from given marginal causal structures where causal structures are represented by the reduced pseudo-density operators; we show that there almost always exists a solution in this case.
By imposing the corresponding constraints on this solution set, we could obtain the required solutions for special classes of marginal problems, like a positive semidefinite marginal problem, separable marginal problem, etc.
We introduce a space-time entropy and propose a method to determine the global causal structure based on the maximum entropy principle.
The notion of quantum pseudo-channel is also introduced and we demonstrate that the quantum pseudo-channel marginal problem can be solved by transforming it into a pseudo-density operator marginal problem via the channel-state duality.
\end{abstract}
\maketitle


\section{Introduction}

The relativity theory treats space and time on equal footing, and they are unified in the conception of the space-time manifold.
However, in the standard Copenhagen interpretation of quantum mechanics, space and time play extremely different roles.
This reflects in several differences between time and space: the time-energy uncertainty relation takes a different form from the position-momentum uncertainty relation \cite{Heisenberg1927};
we only have the probability distribution of particles over space and the time evolution of this distribution is controlled by Hamiltonian, there is no probability distribution over time \cite{bohm2012quantum};
the well-established formalism of tensor-product structure to represent states across space are not suitable for states in time 
\cite{Zanardi2004quantum,cotler2019locality,horsman2017can}, etc.
These differences need to be deeply understood especially when we are dealing with problems that both the relativity and quantum effects cannot be neglected like quantum black hole \cite{Harlow2016jerusalem} and relativistic quantum information \cite{Peres2004quantum,martin2011relativistic}.
Searching for a representation of quantum mechanics that treats space and time in a more even-handed fashion is thus a crucial problem and may shed new light on the notion of quantum space-time. 
There have been a variety of proposals for space-time states, 
process matrix \cite{oreshkov2012quantum}, consistent history \cite{griffiths1984consistent}, entangled histories \cite{cotler2016entangled}, and quantum-classical game \cite{gutoski2007toward}, superdensity operators \cite{cotler2018superdensity}, multi-time states \cite{Aharonov2009multi}, pseudo-density operator (PDO) \cite{fitzsimons2015quantum}, doubled density operator \cite{jia2023spatiotemporal}, etc. 
Among these proposals, PDOs turn out a convenient framework with broad applications in quantum information theory. They provide a direct generalization of the density operator and have been used in various areas, including quantum causal inference~\cite{liu2023quantum}, quantum communication~\cite{Pisarczyk2019causal}, temporal quantum teleportation~\cite{marletto2021temporal}, temporal quantum steering \cite{Uola2020quantum}, and more.

Clarifying the relation between the whole and its parts is crucial in many areas of science.
The question that considers in what situation the local information can be reproduced from a global structure 
is known as the  marginal problem.
The marginal problem has a long history. The \emph{probability distribution marginal problem} (or simply \emph{classical marginal problem}) considers the following question: given a family of sets of random variables $\{\Acal_1,\cdots,\Acal_n\}$ for which each $\Acal_i$ has their respective joint probability distribution $p_{\Acal_i}(X\in \Acal_i)$, and the marginals are compatible, \emph{viz.}, $\sum_{X\in \Acal_i \setminus (\Acal_i\cap \Acal_j)}p_{\Acal_i}=\sum_{Y\in \Acal_j\setminus (\Acal_i\cap \Acal_j)}p_{\Acal_j}$, if there exists a joint probability distribution $p_{\Acal}$ for all random variables $\Acal=\cup_i \Acal_i$ such that all $p_{\Acal_i}$ can be recovered as marginals of $p_{\Acal}$.
This seemingly effortless problem is indeed highly nontrivial, there exist locally compatible distributions that do not have global solutions. 
And the problem has been shown to be NP-hard \cite{pitowsky1989quantum}.
The classical marginal problem has broad applications in many fields, e.g., in quantum contextuality and Bell nonlocality \cite{bell1964, bell1966, kochen1967problem, Brunner2014bell, budroni2021quantum}.
It also has applications in the monogamy of quantum correlations \cite{Jia2016}, in statistical mechanics \cite{schlijper1988tiling}, and so on.
In quantum mechanics, states are represented by density operators, and thus the marginal problems are rephrased in terms of density operators.
The question of whether a given set of marginals (reduced density operators) is compatible with a global density operator is called a quantum state marginal problem, see, e.g \cite{schilling2015quantum}, and references therein.
This seemingly easy problem turned out to be challenging to solve in general, and it lies at the heart of many problems in quantum physics.

The classical and quantum marginal problems traditionally focus on spatially distributed events.
For example, in the classical Bell scenario, spatially distributed parties implement local measurements, and the measurement statistics must satisfy the non-signaling principle \cite{Brunner2014bell}.
When considering the temporal case, specifically in scenarios like the Leggett-Garg test \cite{leggett1985quantum}, the measurement statistics exhibit differences as only one-way non-signaling is applicable.
In the quantum case, the density operator marginal problem has been investigated from aspects, but the temporal quantum marginal problem has only been investigated within the context of the channel marginal problem \cite{haapasalo2021quantum, girard2021jordan, Hsieh2022quantum}.
In this work, we will investigate the quantum marginal problem in the general spatiotemporal setting using PDO formalism.
We will consider both the case of space-time states and higher-order dynamics.
We will show that in this case, there almost always exists a solution to the marginal problem, and the space-time correlations are polygamous in general.
Since PDO encodes the quantum causal structure for a given event set, solving the marginal problem can be regarded as the first step to inferring the global causal structure from the given local causal structures, this has many potential applications in quantum causal inference \cite{costa2016quantum,Giarmatzi2018a,bai2022quantum,cotler2019quantum,liu2023quantum}.
As an application of our results for the space-time marginal problem, we will explore how to infer the global PDO from the given local PDOs using an information-theoretic approach. We will introduce the entropy of spatiotemporal PDO and investigate how to obtain the best approximation of the global PDO using the maximum space-time entropy principle.

The rest of the paper is organized as follows. In Sec.~\ref{sec:causality}, we review the definition of PDO.
Sec.~\ref{sec:PDOMagProb} discusses the space-time PDO marginal problem.
We first show that there always exist a set of solutions in the space of Hermitian trace-one operators.
Then using this result, we discuss how to obtain the solution to the marginal problem by imposing corresponding constraints over the Hermitian trace-one solution set,  like positive semidefiniteness, separability, etc.
If the solution to a marginal problem is guaranteed, we could further ask: How much local information do we need to reconstruct global information?
This problem is investigated in Sec.~\ref{sec:InfGlob}.
By introducing the entropy of space-time states and the generalized maximum entropy principle, we briefly discuss how to infer  the global space-time state from the given set of reduced space-time states.
Finally, we conclude and outline some open problems and future directions.
The Appendices provide additional technical details and address the marginal problem of higher-order maps for PDOs.


\section{Pseudo-density operator}
\label{sec:PDOintro}

\begin{figure}[t]
	\centering
	\includegraphics[width=6.8cm]{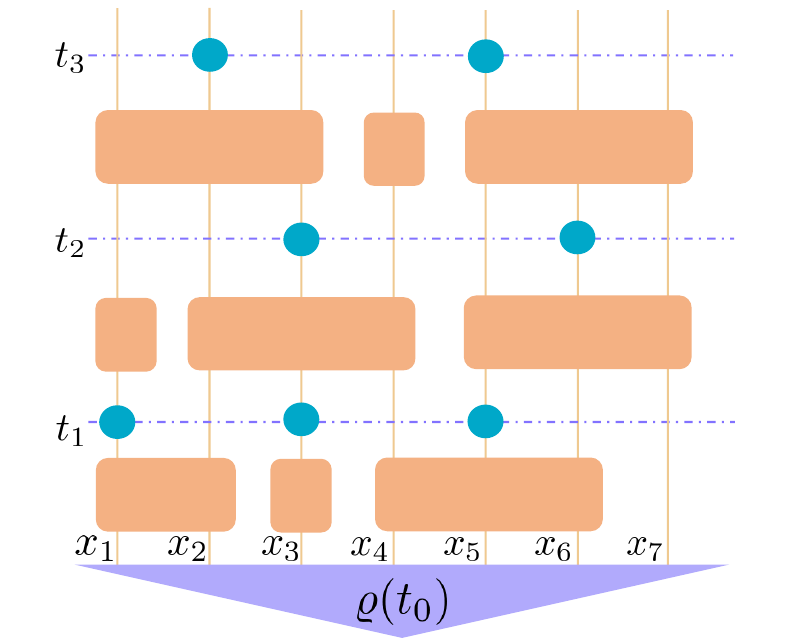}
	\caption{The depiction of the scenario of the pseudo-density operator. The vertical solid lines (quantum wires) represent local quantum freedoms, their labels can be regarded as the spatial coordinates. 
    The time instants are represented by horizontal dashed lines.
    Time flow is upwards.
    The purple triangle represents the input state at the initial time $t_0$.
    Between each two consecutive time instants, there are possibly some quantum operations implemented over the system, and the orange boxes represent the quantum gate given by the quantum channels. The light blue dots represent the space-time events $E(x,t)$, i.e., measuring (generalized) Pauli operators at some instants of time over some local quantum freedom.\label{fig:PDO}}
\end{figure}

In quantum mechanics, a density operator is usually regarded as a probabilistic mixture of pure quantum states. But it can also be viewed as a representation of the correlation functions of Pauli operators over the system, and the most famous one is the qubit Bloch vector representation \cite{Hioe1981N}.
For a multipartite system, each local Pauli operator is measured simultaneously. Thus the density operator only encodes the spatial correlations $T^{\mu_1,\cdots,\mu_n}=\langle \{\sigma_{\mu_1}(x_1), \cdots ,\sigma_{\mu_n}(x_n)\}\rangle=\Tr (\otimes_i\sigma_{\mu_i}\varrho)$:
\begin{equation}
    \varrho=\frac{1}{2^n} \sum_{\mu_i=0}^3 T^{\mu_1,\cdots,\mu_n} \sigma_{\mu_1}\otimes \cdots \otimes \sigma_{\mu_n}.
\end{equation}
It's natural to consider the situation where the local quantum degrees of freedom  are fixed and we measure them at different time instants $T^{\mu_1,\cdots,\mu_n}=\langle \{ \sigma_{\mu_1}(t_1), \cdots , \sigma_{\mu_n}(t_n)\}\rangle$.
This leads to the definition of PDO.
Thus a PDO generalizes the spatial correlation to admit causal structures with subsystems associated with the same degrees of freedom at different time instants \cite{fitzsimons2015quantum}.
Consider a single qubit state, a two-time PDO is characterized by the Pauli correlator $\langle \{\sigma_{\mu_1}(t_1),\sigma_{\mu_2}(t_2))\}\rangle$, this is obtained from the qubit state by implementing sequential measurements. E.g., to obtain $\langle \{\sigma_x(t_1),\sigma_x(t_2)\}\rangle$, we have 
\begin{equation}
\begin{aligned}
        p(x_1,x_2)=\Tr (\Pi_{x_2} \mathcal{E}(\Pi_{x_1}\varrho \Pi_{x_1}) \Pi_{x_2}),\\
       \langle \{\sigma_x(t_1),\sigma_x(t_2)\}\rangle=\sum_{x_1,x_2=\pm 1} x_1 x_2 p(x_1,x_2),
\end{aligned}
\end{equation}
where $\Pi_x$'s (with $x=\pm 1$) are the projector corresponding to Pauli X measurement, and $\mathcal{E}$ is the evolution channel between two time instants.
The general situation is depicted in Fig.~\ref{fig:PDO}, where the input state of the quantum circuit is a multipartite state $\varrho(t_0)$.
We choose several space-time local degrees of freedom (cyan dots in Fig.~\ref{fig:PDO}) to obtain Pauli correlators $T^{\mu_1,\cdots,\mu_n}=\langle \{\sigma_{\mu_1}(t_1),\cdots,\sigma_{\mu_k}(t_1), \sigma_{\mu_{k+1}}(t_2),\cdots \sigma_{\mu_n}(t+m)\}\rangle$.
We will call each point we choose to implement Pauli measurement a (space-time) event, and the set of all events is called the event set.
Based on this generalization, we obtain the PDO 
\begin{equation}
    R=\frac{1}{2^n} \sum_{\mu_i=0}^3 T^{\mu_1,\cdots,\mu_n} \sigma_{\mu_1}\otimes \cdots \otimes \sigma_{\mu_n}.
\end{equation}
This is a Hermitian operator with a trace of one, but it can have negative eigenvalues. The Hermitian operator with trace one plays a crucial role in investigating the temporal states, we will denote the set of such operators as $\Herm_1$. The collection of all PDOs will be denoted as $\PDO$.
The symmetric bloom construction of temporal state \cite{fullwood2023quantum,parzygnat2023timereversal} is also contained in $\Herm_1$.
Appendix~\ref{sec:causality} gives a more comprehensive discussion of PDO.

Classical space-time causality is a partial order relation $R(\Acal)\subset  \Acal\times \Acal$ over a collection of space-time events $\mathcal{A}=\{E(x,t)\}_{x,t}$, the causal relation between two events is determined by their corresponding space-time coordinates.
The PDO provides a framework to encode the quantum causality. Here, for an event set $\Acal$, we assign a Hilbert space $\mathcal{H}=\otimes_{e\in \Acal} \mathcal{H}_e$, a PDO over $\mathcal{H}$ can be regarded as a quantum generalization of the classical quantum relation $R(\Acal)$.
Since density operators can be regarded as spatial PDO, we see that the negativity of PDO is an indicator of the existence of the temporality of the event set $\Acal$.

\section{Pseudo-density operator marginal problem}
\label{sec:PDOMagProb}

In the conventional space-time causal marginal problem, we consider a family of sets of events $\mathfrak{M}_{\Acal} = \{\mathcal{A}_1, \cdots, \mathcal{A}_k\}$, referred to as a marginal scenario of $\mathcal{A} = \cup_i \mathcal{A}_i$. The problem is to determine if there exists a global causal structure $R(\mathcal{A})$ over all events in $\Acal$ that is consistent with the causal structures $R(\mathcal{A}_i)$ for all $i=1, \cdots, n$.
This problem is straightforward if all the causal structures $R(\Acal_i)$ are compatible with each other. In such cases, a solution always exists.

\begin{observation}
       The deterministic classical causal marginal problem always has a solution.
\end{observation}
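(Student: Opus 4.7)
The plan is to exploit the fact that, by the preceding paragraph, a classical causal structure $R(\Acal_i)\subset \Acal_i\times \Acal_i$ is not arbitrary data but is uniquely determined by the space-time coordinates of the events $E(x,t)\in \Acal_i$. So the observation should reduce to a tautology once we correctly unpack what ``deterministic'' and ``compatible'' mean.

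First I would fix notation. Each event $e\in \Acal=\cup_i \Acal_i$ carries a label $(x_e,t_e)$, and there is a fixed partial order $\preceq$ on the set of all space-time coordinates (e.g.\ the light-cone order in a relativistic setting, or simply the time order in a non-relativistic setting). The deterministic assumption means exactly that $R(\Acal_i)=\{(e,e')\in \Acal_i\times \Acal_i : (x_e,t_e)\preceq (x_{e'},t_{e'})\}$. In particular, for any $e,e'\in \Acal_i\cap \Acal_j$ the membership $(e,e')\in R(\Acal_i)$ depends only on the coordinates of $e$ and $e'$, so the restrictions of $R(\Acal_i)$ and $R(\Acal_j)$ to $(\Acal_i\cap\Acal_j)^2$ automatically coincide; compatibility of the marginals is not an extra hypothesis but a consequence.

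Next I would propose the global candidate
\begin{equation}
R(\Acal) \;=\; \{(e,e')\in \Acal\times \Acal : (x_e,t_e)\preceq (x_{e'},t_{e'})\}.
\end{equation}
Because $\preceq$ is a partial order on coordinates, $R(\Acal)$ is reflexive, antisymmetric and transitive on $\Acal$, hence a valid causal structure on the union. The verification that $R(\Acal)\cap (\Acal_i\times \Acal_i)=R(\Acal_i)$ for every $i$ is then immediate from the definition, since both sides are the sub-relation of $\preceq$ induced by the coordinates of events in $\Acal_i$.

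I do not expect any substantive obstacle: the whole content of the observation is that, once causality is decided coordinate-wise, the global extension is forced and must be consistent with every marginal. The only thing that requires care is being explicit that ``deterministic'' rules out situations where two copies of the same event carry different causal roles in different $\Acal_i$'s, so that no cycle can be introduced when forming the union. If one instead wished to allow a purely relational input (just partial orders $R(\Acal_i)$ without coordinates), one would take the transitive closure of $\bigcup_i R(\Acal_i)$ and the nontrivial step would be antisymmetry, i.e.\ ruling out cycles of the form $e_1<_{i_1}e_2<_{i_2}\cdots<_{i_k}e_1$; however, in the present coordinate-based setting this cannot occur, which is why the statement is an \emph{observation} rather than a theorem.
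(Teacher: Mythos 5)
Your proposal is correct and follows the same reasoning the paper relies on (the paper leaves the observation unproved, but its surrounding discussion and Appendix B make clear that the deterministic causal structure is the restriction of the fixed coordinate-induced partial order, so the global relation is forced and automatically compatible with every marginal). Your explicit construction of $R(\Acal)$ and the remark distinguishing the coordinate-based setting from the purely relational one simply make that implicit argument precise.
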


It's worth pointing out that even for the classical probabilistic causal model, the marginal problem is highly non-trivial in general \cite{Allen2017quantum,barrett2019quantum} and largely unexplored.

Now let us consider the quantum case, where the causal structure is encoded by a PDO.
One of the most crucial features of PDOs is that the partial trace is well-defined. For a given set of events $\Acal$, if we make a bipartition $\Acal=\Acal_1\cup \Acal_2$, the reduced PDO can be defined as $R_{\Acal_1}=\Tr_{\Acal_2} R_{\Acal}$ (and similarly for $R_{\Acal_2}$).
Two PDOs $R_{\Acal}$ and $R_{\mathcal{B}}$ are called compatible if $\Tr_{\Acal\setminus \mathcal{B}}R_{\mathcal{A}}=\Tr_{\mathcal{B}\setminus \Acal} R_{\mathcal{B}}$, that is, they have the same reduced PDOs on their overlapping event set $\Acal \cap \mathcal{B}$.
The marginal scenario $\mathfrak{M}_{\Acal}$ for $\Acal$, in this case, consists of a collection of event sets $\Acal_1,\cdots,\Acal_n$ together with compatible PDOs $R_{\Acal_1},\cdots,R_{\Acal_n}$. We can define the following PDO marginal problem.

\begin{definition}[PDO marginal problem]
Consider a marginal scenario consisting of a family of event sets $\Acal_1,\cdots,\Acal_n$ with their corresponding PDOs $R_{\Acal_1},\cdots,R_{\Acal_n}$, such that they are compatible. The PDO marginal problem asks if there exists a global PDO $R_{\mathcal{A}}$ with $\Acal=\cup_i \Acal_i$ such that $R_{\Acal_i}=\operatorname{Tr}_{\Acal\setminus \Acal_i} R_{\Acal}$ for all $i=1,\cdots,n$.
\end{definition}

The PDO marginal problem always has a trivial solution if the marginal event sets do not overlap, $R_{\Acal}=\otimes_i R_{\Acal_i}$.
The problem becomes more complicated and interesting when the marginal event sets have non-empty overlapping.
When PDOs are not compatible, it's obvious that there is no solution to the PDO marginal problems.

From the previous discussion, we see that an $n$-event PDO is determined by a rank-$n$ tensor $T^{\mu_1,\cdots,\mu_n}$.  
Taking the partial trace over some event subset, we obtain the new tensor for the reduced PDO by just setting the corresponding indices as zero. 
For example, for $T^{\mu_1\mu_2\mu_3}$, tracing over the third event, the tensor of the reduced PDO is just $T^{\mu_1\mu_20}$.
This substantially simplifies the problem.

\begin{theorem}[$\mathbf{Herm}_1$ marginal problem]
\label{thm:HermMag}
Consider the marginal problem $\{R_{\Acal_i}\}_{i=1}^n$ with $R_{A_i}\in \PDO(\Acal_i)$ and  $\Acal=\cup_{i=1}^n\Acal_i$.
In $\Herm_1(\Acal)$, 
there always exists a solution $R$ which is the solution to the marginal problem. In other words, the marginal problem in $\Herm_1(\Acal)$ is trivial.
\end{theorem}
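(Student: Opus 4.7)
The plan is to exploit the Pauli tensor representation of PDOs, which was already emphasized in the paragraph before the theorem statement: an $n$-event PDO on $\Acal = \{e_1,\dots,e_N\}$ is fully determined by its real correlator tensor $T^{\mu_1,\dots,\mu_N}$ with $\mu_j \in \{0,1,2,3\}$, partial tracing over an event $e_j$ amounts to restricting to $\mu_j = 0$, and the normalization $\Tr R = 1$ is equivalent to $T^{0,\dots,0} = 1$. I would lift each marginal $R_{\Acal_i}$ to its tensor $T_{\Acal_i}$ and then glue these tensors into a single tensor $T$ on $\Acal$.

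Concretely, for a multi-index $\bm{\mu} = (\mu_1,\dots,\mu_N)$, let $\mathrm{supp}(\bm{\mu}) = \{j : \mu_j \neq 0\}$. I would define
\begin{equation*}
T^{\bm{\mu}} \;=\;
\begin{cases}
T_{\Acal_i}^{\bm{\mu}|_{\Acal_i}}, & \text{if } \mathrm{supp}(\bm{\mu}) \subseteq \Acal_i \text{ for some } i,\\[2pt]
0, & \text{otherwise},
\end{cases}
\end{equation*}
where $\bm{\mu}|_{\Acal_i}$ denotes the restriction of the multi-index to $\Acal_i$. The first key step is to verify that the first case is unambiguous: if $\mathrm{supp}(\bm{\mu}) \subseteq \Acal_i \cap \Acal_j$, then $T_{\Acal_i}^{\bm{\mu}|_{\Acal_i}}$ and $T_{\Acal_j}^{\bm{\mu}|_{\Acal_j}}$ must coincide. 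This is exactly the compatibility assumption $\Tr_{\Acal_i \setminus \Acal_j} R_{\Acal_i} = \Tr_{\Acal_j \setminus \Acal_i} R_{\Acal_j}$, rewritten on the tensor level as the statement that the two restrictions agree whenever all ``extra'' indices are set to $0$.

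From the tensor $T$ I would then build the candidate operator
\begin{equation*}
R \;=\; \frac{1}{2^N} \sum_{\bm{\mu}} T^{\bm{\mu}}\, \sigma_{\mu_1} \otimes \cdots \otimes \sigma_{\mu_N}.
\end{equation*}
Hermiticity of $R$ follows from the tensor entries being real (the marginals are Hermitian, so their Pauli tensors are real, and zeros are real). Trace-one follows from $T^{0,\dots,0} = T_{\Acal_i}^{0,\dots,0} = 1$ for any $i$, which is consistent across all marginals. Finally, I would check the marginal conditions: for each $i$, the tensor of $\Tr_{\Acal \setminus \Acal_i} R$ is obtained by setting $\mu_j = 0$ for $j \notin \Acal_i$ in $T$, which by construction reproduces $T_{\Acal_i}$, hence $\Tr_{\Acal \setminus \Acal_i} R = R_{\Acal_i}$. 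Thus $R \in \Herm_1(\Acal)$ solves the marginal problem.

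The only subtle point in this argument is the well-definedness check, and it is essentially forced by compatibility, so I do not expect a genuine obstacle. The freedom in choosing the ``otherwise'' entries (here set to zero, but any real values consistent with $T^{0,\dots,0}=1$ would work) is what justifies the statement that the problem is trivial in $\Herm_1$: the solution set is a non-empty real affine subspace, and positivity, separability, or PDO-validity constraints will later be imposed to carve out the solution sets for the harder variants.
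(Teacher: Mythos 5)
Your proposal is correct and takes essentially the same route as the paper: both work in the generalized-Pauli correlation-tensor representation, use the fact that partial trace corresponds to setting the traced indices to zero, fix exactly the tensor entries constrained by the marginals, and leave the remaining entries free (hence the affine solution set). The only difference is presentational --- the paper argues via an explicit three-event example followed by an induction on $|\Acal|$, whereas you give a single direct gluing definition with an explicit well-definedness check from the compatibility condition, which is arguably the cleaner way to organize the same idea.
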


\begin{proof}
Before we give general proof, let's consider a simple example.
Suppose that $\Acal_1=\{1,2\}$, $\Acal_2=\{2,3\}$ and $\Acal_3=\{1,3\}$ (we use $1$ to denote $E_1$, etc.),
the corresponding qubit PDOs are $R_{\Acal_1}$ and $R_{\Acal_2}$ with their respective correlation tensor $T^{\mu_1\mu_2}_{\Acal_1}$, $T^{\mu_2\mu_3}_{\Acal_2}$ and $T^{\mu_1\mu_3}_{\Acal_3}$.
The compatibility condition over event $\Acal_1\cap \Acal_2$ is equivalent to $T^{0\mu_2}_{\Acal_1}=T^{\mu_2 0}_{\Acal_2}$ and similarly for others.
Our aim is to find a rank-3 tensor $T_{\Acal}^{\mu_1\mu_2\mu_3}$ such that $T^{\mu_1\mu_2}_{\Acal_1}$, $T^{\mu_2\mu_3}_{\Acal_2}$ and $T^{\mu_1\mu_3}_{\Acal_3}$ can be reproduced from it by setting the corresponding indices as zeros.
This can be solved by the following procedure:
(i) set $T_{\Acal}^{\mu_1\mu_2 0}=T^{\mu_1\mu_2}_{\Acal_1}$;
(ii) set $T_{\Acal}^{0\mu_2 \mu_3}=T^{\mu_2\mu_3}_{\Acal_2}$; 
(iii) set $T_{\Acal}^{\mu_1 0 \mu_3}=T^{\mu_1\mu_3}_{\Acal_3}$
(iv) set arbitrary real 
values to $T^{\mu_1 \mu_2 \mu_3}$ with $\mu_1,\mu_2,\mu_3\neq 0$.
It's clear that the solutions form a $3^3$ dimensional real vector space.
See Fig.~\ref{fig:Tensor} for an illustration.

In  this same spirit, we can prove the general statement using induction.
Suppose that for any $\{R_{\Acal_i}\}$ with $|\cup_i\Acal_i|\leq (n-1)$, there always exists a solution.
Now consider a set of PDOs with $|\cup_i\Acal|=n$, we divide the collection of event sets $\{\Acal_i\}$ into two classes: (i) those whose sizes are less than or equal to $n-2$, which we denote as $\Bcal_i$; (ii) those whose sizes are equal to $n-1$, which we denote as $\mathcal{C}_i$.
Notice that without loss of generalities, we assume that there is no $i,j$ such that $\Acal_i \subsetneq \Acal_j$.
The assumption for induction ensures that there is a marginal problem solution for the first class, $R_{\Bcal}$ with $\Bcal=\cup_i \Bcal_i$ and $|\Bcal|\leq n-1$.
We could consider the worst case that $|\Bcal|=n-1$.
The problem becomes a marginal problem for $\{\Bcal,\mathcal{C}_1,\cdots,\mathcal{C}_k\}$.
In the worst case, there $n$ such event sets, $\mathcal{C}_1=\{2,\cdots,n\}$,$\cdots$, $\mathcal{C}_{n-1}=\{1,\cdots,n-2,n\}$, $\Bcal={1,\cdots,n-1}$. We construct the correlation tensor $T^{\mu_1,\cdots,\mu_n}$ as follows: 
(i) set $T^{0\mu_2\cdots\mu_n}=T_{\mathcal{C}_1}^{\mu_2\cdots\mu_n}$, $T^{\mu_1 0 \mu_3\cdots \mu_n}=T_{\mathcal{C}_2}^{\mu_1\mu_3\cdots\mu_n}$, etc.; 
(ii) assign arbitrary real values to $T^{\mu_1\cdots \mu_n}$ with $\mu_1,\cdots,\mu_n \neq 0$.
This completes the proof. 
\end{proof}

\begin{figure}[t]
	\centering
	\includegraphics[width=4cm]{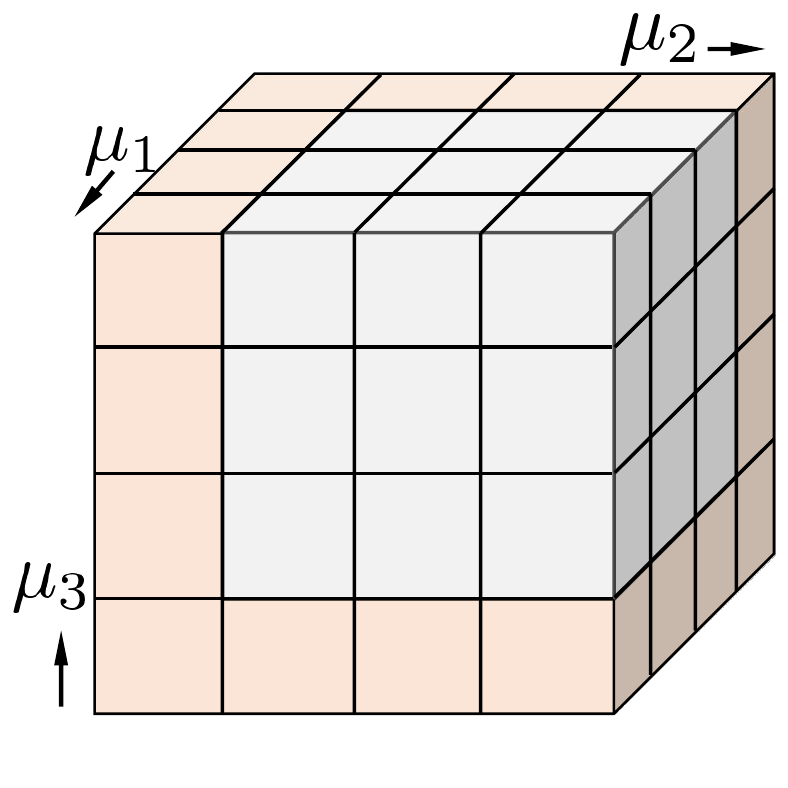}
	\caption{The illustration of the proof for $\Herm_1$ marginal problem, the cube represents the tensor $T^{\mu_1\mu_2\mu_3}$ of marginal problem solution $R_{\Acal}$. The light gray boxes represent the free parameter, while the light red boxes represent the parameters fixed by reduced PDOs $R_{\Acal_1},R_{\Acal_2},R_{\Acal_3}$.
	\label{fig:Tensor}}
\end{figure}

Notice that this theorem strongly depends on the existence of Hilbert-Schmidt operators, and this approach can also be applied to quantum state marginal problems.
We will denote the set of solutions for a given PDO marginal scenario $\mathfrak{M}_{\Acal}$ in $\Herm_1(\Acal)$ as $\mathbf{Marg}(\mathfrak{M}_{\Acal})$.
The solution for marginal problems in $\PDO(\Acal)$ is a subset of $\mathbf{Marg}(\mathfrak{M}_{\Acal})$.
In practice, there will be some other constraints to the solution. For example, in spatial cases, the pure state solution requires that the global state is a pure state; the bosonic solution requires the solution to be symmetric under permutation, and the fermionic solution requires the state to be antisymmetric under permutation.

When dealing with an event set containing a large number of events, symmetry is a useful tool.
We introduce the notion of quantum pseudo-channel (QPC) and symmetry for PDOs in Appendix~\ref{sec:QPC}, and we have the following result:
\begin{theorem}
    If the PDO marginal problem for a collection of PDOs $\mathcal{R}=\{R_{\Acal_1},\cdots,R_{\Acal_n}\}$ has a $G$-symmetric solution $R_{\Acal}$ with $\Acal=\cup_i\Acal_i$, then $G$ is also a symmetry of $\mathcal{R}$.
\end{theorem}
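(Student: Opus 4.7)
The plan is to exploit the equivariance of the partial trace with respect to the $G$-action realised through quantum pseudo-channels. I would begin by recalling from Appendix~\ref{sec:QPC} that a $G$-symmetry of $R_{\Acal}$ is specified by a representation $g \mapsto \Phi_g$ of $G$ into the invertible QPCs on $\Herm_1(\Acal)$, with each $\Phi_g$ factoring over events as a combination of an event permutation $\pi_g$ and local transformations on each factor, and satisfying $\Phi_g(R_{\Acal}) = R_{\Acal}$ for every $g \in G$.

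The central step is then the partial-trace equivariance identity: for any $\Acal_i \subseteq \Acal$,
\begin{equation}
\Tr_{\Acal \setminus \pi_g(\Acal_i)} \circ \Phi_g = \Phi_g^{(\Acal_i)} \circ \Tr_{\Acal \setminus \Acal_i},
\end{equation}
where $\Phi_g^{(\Acal_i)}$ denotes the QPC on $\Acal_i$ obtained by restricting the local factors of $\Phi_g$ to those indexed by $\Acal_i$ and relabelling via $\pi_g$. This identity is immediate from the tensor-product factorisation of $\Phi_g$ together with the standard commutation of partial trace with a channel acting only on the complementary factors. Applying both sides to $R_{\Acal}$ and invoking $\Phi_g$-invariance gives
\begin{equation}
R_{\pi_g(\Acal_i)} \;=\; \Tr_{\Acal \setminus \pi_g(\Acal_i)} R_{\Acal} \;=\; \Phi_g^{(\Acal_i)}(R_{\Acal_i}).
\end{equation}
This is precisely the statement that $G$ acts on the collection $\mathcal{R}$: each $g$ sends the marginal indexed by $\Acal_i$ to the one indexed by $\pi_g(\Acal_i)$ via the restricted QPC $\Phi_g^{(\Acal_i)}$. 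In particular, when $\pi_g$ fixes $\Acal_i$ setwise, $R_{\Acal_i}$ itself is $\Phi_g^{(\Acal_i)}$-invariant.

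The main obstacle is conceptual rather than computational: one must ensure that $\pi_g(\Acal_i)$ is itself among the subsets appearing in the marginal scenario $\mathfrak{M}_{\Acal}$, so that $R_{\pi_g(\Acal_i)}$ genuinely belongs to $\mathcal{R}$. This is either built into the definition of a $G$-symmetric marginal scenario (the index family being $\pi$-invariant) or, if one drops that requirement, the argument above shows the slightly stronger fact that the $G$-orbit of $\mathcal{R}$ is automatically composed of mutually compatible reduced PDOs of $R_{\Acal}$, which can be adjoined to $\mathcal{R}$ without spoiling compatibility. A secondary technical point is that QPCs need not be completely positive, but since the equivariance identity rests only on linearity and tensor-product factorisation on $\Herm_1$, the extension from channels to pseudo-channels is essentially automatic.
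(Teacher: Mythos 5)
Your proposal is, at its core, the same argument the paper gives: the paper's one-line proof observes that $\Tr_{\Acal\setminus\Acal_i}\Phi_g(R_{\Acal})=\Tr_{\Acal\setminus\Acal_i}(R_{\Acal})=R_{\Acal_i}$ and declares the induced symmetry of $R_{\Acal_i}$ to be the marginal QPC $\Phi_g^i=\Tr_{\Acal_i^c|\Acal_i^c}\Phi_g$, which is exactly your partial-trace equivariance identity specialized to $\pi_g=\mathrm{id}$. The differences are worth noting, though. First, the paper's definition of a $G$-symmetry (Appendix~\ref{sec:QPC}) says nothing about $\Phi_g$ factoring into an event permutation plus local maps; you have imported that structure. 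It is not harmless decoration: whenever $\pi_g(\Acal_i)\neq\Acal_i$ your conclusion is the covariance statement $R_{\pi_g(\Acal_i)}=\Phi_g^{(\Acal_i)}(R_{\Acal_i})$, whereas the paper's definition of a $G$-symmetry of the collection $\mathcal{R}$ demands strict invariance $\Phi_g^i(R_{\Acal_i})=R_{\Acal_i}$ for each $i$; your argument only delivers the theorem's literal conclusion in the case you flag, namely when $\pi_g$ fixes each $\Acal_i$ setwise. The paper avoids this entirely by taking the marginal of $\Phi_g$ with the \emph{same} subset $\Acal_i$ on input and output. Second, your factorization hypothesis does buy something the paper quietly assumes: the marginal QPC $\Phi_{\Ycal|\Xcal}$ is only well defined when the defining relation~\eqref{eq:magC} admits a solution (a no-signalling-type condition on $\Phi_g$), and a tensor-factorized $\Phi_g$ guarantees this, whereas the paper's proof simply asserts that ``the symmetry operation is just the marginal QPC of $\Phi_g$'' without checking existence. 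So your write-up is more careful on the existence point but proves a covariance variant of the statement; to match the theorem as stated you should either drop the permutation and work directly with the marginal QPC on $\Acal_i$, or restrict to $\pi_g$ preserving the index family elementwise.
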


\begin{proof}
 Notice that 
 $\Tr_{\mathcal{A}\setminus \Acal_i^c}\Phi_g(R_{\Acal})=\Tr_{\mathcal{A}\setminus \Acal_i^c}(R_{\Acal})=R_{\Acal_i}$, the symmetry operation is just the marginal QPC of $\Phi_g$.
\end{proof}

Now, let's explore how these findings can be applied to various types of PDO marginal problems.

\subsection{Space-time separable marginal problem}

In Ref. \cite{navascues2021entanglement}, a special case of quantum state marginal problem is proposed, where they consider a collection of separable states and ask if there exists a global separable state and can reproduce all the given states as marginals.
We will call this a separable marginal problem (In Ref. \cite{navascues2021entanglement}, it's named as entanglement marginal problem).
In space-time state formalism, we can consider a similar problem.
But in this case, we need to introduce the notion of space-time separable states.
Consider an event set $\Acal$, we define the space-time product in the usual way $|a_1,\cdots,a_n\rangle =|a_1\rangle\otimes \cdots \otimes |a_n\rangle$, See Sec.~\ref{sec:SepRep} for details. Denote the set of all space-time product states as $\mathbf{Prod}(\Acal)$, and then the set of space-time separable states are just the convex hull $\mathbf{Sep}(\Acal)=\operatorname{Conv}(\mathbf{Prod}(\Acal))$.
The space-time separable state is thus of the form
\begin{equation}
W_{\Acal}=\sum_{a_1,\cdots,a_n}p(a_1,\cdots,a_n)\otimes_{i=1}^n |a_i\rangle \langle a_i|,
\end{equation}
where $p(a_1,\cdots,a_n)$ is a probability distribution. It's clear that
$W_{\Acal}$ is a positive semidefinite trace-one operator.

\begin{definition}[space-times separable marginal problem]
    For a marginal scenario $\mathfrak{M}_{\Acal}$ consisting of a given collection of event sets $\{\Acal_i\}$ with their corresponding  separable space-time separable states $\{W_{\Acal_i}\}$, the  space-times separable marginal problem asks if there exists a space-time separable state $W_{\Acal}$ for $\Acal=\cup_i\Acal_i$ such that all $W_{\Acal_i}$ can be reproduced by taking marginals.
\end{definition}

Let us now see how to use theorem~\ref{thm:HermMag} to solve this problem.
Combining the  theorem~\ref{thm:HermMag} and corollary~\ref{Coro:SepRep},
we know that there always exists a set of quasi-probabilistic separable solution
\begin{equation}
\begin{aligned}
     &\mathbf{Marg}(\mathfrak{M}_{\Acal})\\
     =&\{W_{\Acal}=\sum_{a_1,\cdots,a_n} p(a_1,\cdots,a_n)\otimes_{i=1}^n |a_i\rangle \langle a_i|\},
\end{aligned}
\end{equation}
where all $p(a_1,\cdots,a_n)$ are quasi-probability distributions.
To obtain the positive semidefinite solution set, we need first impose the positive semidefinite condition
\begin{equation}
\begin{aligned}
     &\mathbf{Marg}^{\rm pos}(\mathfrak{M}_{\Acal})\\
     =&\{W_{\Acal}\in \mathbf{Marg}(\mathfrak{M}_{\Acal})| \Tr (W_{\Acal}Y)\geq 0, \forall Y\geq 0\}.
\end{aligned}
\end{equation}
The second step is to choose the separable ones from these positive semidefinite solutions.
However, there is a more efficient approach to filter the solution from $\mathbf{Marg}(\mathfrak{M}_{\Acal})$ using the polytope approximation of $\mathbf{Sep}(\Acal)$, see Fig.~\ref{fig:Sep}.
Suppose that we have $n$ space-time separable states $R_1,\cdots,R_n$, they can generate a convex polytope $\mathbb{P}=\mathbf{Sep}(R_1,\cdots,R_n)=\operatorname{Conv}(R_1,\cdots,R_n)$.
By Minkowski-Weyl theorem, this polytope can be rewritten as a bounded intersection of half-spaces $\mathbb{P}=\cap_{i=1}^m\mathbb{H}_i$.
Each half-space is determined by an Hermitian operator $K_i$, namely $\mathbb{H}_i=\{R\in \Herm|\langle R,K_i\rangle \geq 0\}$.
The marginal problem solution contained in this polytope is thus
\begin{equation}
  \begin{aligned}
     &\mathbf{Marg}^{\mathbb{P}}(\mathfrak{M}_{\Acal}) \\
     =&\{W_{\Acal}\in \mathbf{Marg}(\mathfrak{M}_{\Acal})| \Tr (W_{\Acal}K_i)\geq 0, \forall i\}.
\end{aligned}  
\end{equation}
In this way, we obtain an operational method to solve the space-time separable marginal problem, which can be implemented numerically.
To minimize the computational complexity, we need to find an efficient way to determine the half-spaces from the extreme points of the polytope, this is the well-known \emph{convex hull problem} and it is proved to be an $\#$P-hard in general and  NP-hard for simplicial polytope \cite{dyer1983complexity}.

\begin{figure}[t]
	\centering
	\includegraphics[width=8cm]{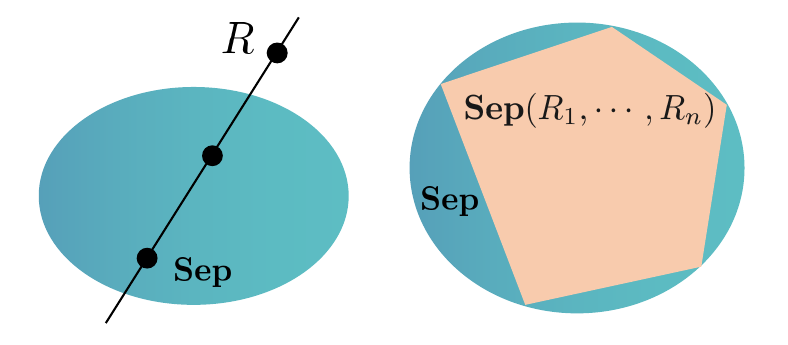}
	\caption{The left figure illustrates how to decompose a general PDO into a quasi-probabilistic mixture of space-times product state. For an Hermitian trace-one $R$, we can always find two separable states $W_1,W_2$ such that $R=\eta W_1+(1-\eta) W_2$ with $\eta \in \mathbb{R}$.
 The right figure illustrates the separable polytope constructed from a set of separable PDOs.
	\label{fig:Sep}}
\end{figure}

\subsection{Space-time symmetric extension}

Another crucial case of the marginal problem is the so-called symmetric extension \cite{Doherty2002distinguishing,Lami2019extendibility}, which has many applications in quantum information theory.
For the space-time state, we have a corresponding generalization.
Consider a two-event set $\Acal=\{A,B\}$ and its space-time state $W_{AB}$ as defined in definition~\ref{def:W} in Appendix~\ref{sec:causality}, the symmetric extension of $W$ is a $n$-event space-time state $W_{ABB_1\cdots B_{n-2}}$ such that all reduced space-time states satisfy $W_{AB_i}=W_{AB}$.
Here we show that in the space-time state framework, we always have a solution.

\begin{corollary}
For any two-event space-time state (for which  PDO is a special example) $W_{AB}$, the symmetric extension $W_{AB_1\cdots B_k}$ always exists in the space of all quasi-probabilistic mixture of space-time product state. 
\end{corollary}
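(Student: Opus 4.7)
The plan is to realize the symmetric extension as a specific instance of the $\Herm_1$ marginal problem of Theorem~\ref{thm:HermMag} and then invoke the quasi-probabilistic separable decomposition (Corollary~\ref{Coro:SepRep}) to re-express the resulting Hermitian trace-one solution as a quasi-probabilistic mixture of space-time product states. Since Theorem~\ref{thm:HermMag} already guarantees the existence of marginal solutions in $\Herm_1$ whenever the input marginals are compatible, the entire job reduces to (i) packaging the symmetric extension as such a marginal scenario, (ii) enforcing permutation symmetry among the $B$-copies, and (iii) translating the Hermitian trace-one solution into the product-state language.

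First, I would set up the marginal scenario $\mathfrak{M}_{\Acal}$ consisting of $k$ event sets $\Acal_i=\{A,B_i\}$ with prescribed marginals $W_{AB_i}:=W_{AB}$ for $i=1,\ldots,k$, where the $B_i$ are copies of $B$. Compatibility is immediate because any two $\Acal_i,\Acal_j$ overlap only on $A$ and $\Tr_{B_i}W_{AB_i}=W_A=\Tr_{B_j}W_{AB_j}$. Theorem~\ref{thm:HermMag} then produces an operator $R\in\Herm_1(\{A,B_1,\ldots,B_k\})$ whose single-copy marginals recover $W_{AB}$. To upgrade $R$ to a permutation-symmetric extension, I would average over the natural $S_k$-action that permutes the $B$-factors,
\begin{equation}
R^{\mathrm{sym}}=\frac{1}{k!}\sum_{\pi\in S_k}U_\pi R\, U_\pi^\dagger ,
\end{equation}
with $U_\pi$ the unitary implementing $\pi$ on the $B$-tensor slots. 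Since $\Herm_1$ is convex and closed under conjugation by unitaries, $R^{\mathrm{sym}}\in\Herm_1$; and because the target marginals on each $AB_i$ are all the same operator $W_{AB}$, relabeling via $U_\pi$ preserves them, so $\Tr_{B_1\cdots\widehat{B_i}\cdots B_k}R^{\mathrm{sym}}=W_{AB}$ for every $i$.

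Finally, I would apply Corollary~\ref{Coro:SepRep} to $R^{\mathrm{sym}}$ to obtain a quasi-probabilistic decomposition
\begin{equation}
R^{\mathrm{sym}}=\sum_{a,b_1,\ldots,b_k}q(a,b_1,\ldots,b_k)\,|a\rangle\langle a|\otimes\bigotimes_{i=1}^{k}|b_i\rangle\langle b_i|,
\end{equation}
with $q$ a (signed) normalized distribution. This places $R^{\mathrm{sym}}$ in the space of quasi-probabilistic mixtures of space-time product states and completes the construction. Notice that $R^{\mathrm{sym}}$ in general fails to be positive semidefinite; that is the genuine quantum obstruction studied classically under the name of symmetric extendibility, and it is precisely what the corollary is sidestepping by working inside $\Herm_1$.

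The only subtle point, and thus the main thing to verify carefully, is that the symmetrization step is consistent with the PDO-style partial trace: the permutation $U_\pi$ acts only on the $B$-factors, so it commutes with $\Tr_{B_j}$ up to a relabeling of the index being traced out, and this manipulation remains valid for trace-one Hermitian operators carrying negative eigenvalues. No additional machinery beyond Theorem~\ref{thm:HermMag} and Corollary~\ref{Coro:SepRep} is therefore needed, and the argument goes through for arbitrary $k$.
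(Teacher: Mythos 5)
Your proof is correct, but it takes the longer of the two routes the paper itself mentions. The paper's own proof is a one-line explicit construction: it applies Corollary~\ref{Coro:SepRep} to $W_{AB}$ to get $W_{AB}=\sum_{ab}p(a,b)\,|a\rangle\langle a|\otimes|b\rangle\langle b|$ and then simply \emph{defines} the extension as $\sum_{ab}p(a,b)\,|a\rangle\langle a|\otimes(|b\rangle\langle b|)^{\otimes n-1}$, whose marginals are trivially checked since each $|b\rangle\langle b|$ traces to $1$; the result is then automatically a quasi-probabilistic mixture of product states with the same quasi-distribution $p$. You instead package the extension as a $\Herm_1$ marginal problem, invoke Theorem~\ref{thm:HermMag}, symmetrize over $S_k$, and only then apply Corollary~\ref{Coro:SepRep} to re-express the answer in product form --- exactly the alternative the authors flag in the sentence following the corollary (``This can also be transformed into a marginal problem and be proved using theorem~\ref{thm:HermMag}'') but do not carry out. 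Your route is valid: compatibility of the marginals $\{W_{AB_i}\}$ on their common overlap $A$ is immediate, the $S_k$-average stays in $\Herm_1$ and preserves the (identical) two-event marginals, and Corollary~\ref{Coro:SepRep} applies to any trace-one Hermitian operator as the paper remarks. What your approach buys is a manifestly permutation-invariant extension (stronger than the paper's definition of symmetric extension, which only demands $W_{AB_i}=W_{AB}$ for each $i$) and a template that generalizes to extendibility of $m$-event states; what it costs is explicitness --- the paper's construction hands you the extension and its quasi-probability distribution in closed form, whereas yours leaves free tensor components unspecified before symmetrization. One small point of care: Theorem~\ref{thm:HermMag} is stated for PDO marginals, while the corollary concerns general space-time states $W_{AB}$; this is harmless since the tensor argument in that theorem uses only Hermiticity and unit trace, but it is worth saying explicitly.
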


\begin{proof}
 From corollary~\ref{Coro:SepRep} in Appendix, we see that $W_{AB}$ can be decomposed as 
 \begin{equation}
     W_{AB}=\sum_{ab}p(a,b) |a\rangle \langle a|\otimes |b\rangle \langle b|.
 \end{equation}
 The symmetric extension is given by
\begin{equation}
    W_{ABB_1\cdots B_{n-2}}=\sum_{ab}p(a,b) |a\rangle \langle a|\otimes (|b\rangle \langle b|)^{\otimes n-1}.
\end{equation}
 It's straightforward to verify that all reduced $W_{AB_i}=W_{AB}$.
\end{proof}

This technique can also be applied to extendibility for $m$-event $W_{A_1\cdots A_k B_1\cdots B_l}$ with respect to $B_1\cdots B_l$.
Notice the above corollary means that any $W\in \Herm_1$ is extendible in $\Herm_1$.
We prove it using the corollary~\ref{Coro:SepRep} in Appendix. This can also be transformed into a marginal problem and be proved using theorem~\ref{thm:HermMag}.
Suppose we have a collection of space-time state $W_{AB}=W_{AB_1}=\cdots W_{AB_{n-2}}$, theorem~\ref{thm:HermMag} ensures that there exists a non-empty solution set $\mathbf{Marg}(W_{AB},W_{AB_1},\cdots, W_{AB_{n-2}})$.
Then we can add more constraints to filter the solutions we need as we have done in the previous subsection.

\subsection{Polygamy of space-time correlations}

For spatial quantum correlations, it is well known that there are monogamy relations for entanglement, quantum steering, and Bell nonlocality. These monogamy relations impose restrictions on the distribution and sharing of these quantum correlations among multiple parties.
For example, in the case of a singlet state, it is not possible for three parties to share the state simultaneously. If Alice and Bob share the singlet state, then the state between Alice and Carol cannot be a singlet state.
The monogamy relation can be reformulated using a quantum marginal problem.
This means that the marginal scenario $\mathfrak{M}=\{\psi^-_{AB},\psi^-_{AC}\}$ has no solution.
However, for space-time correlations, the monogamy relation will be broken, an example has been given in Ref. \cite{marletto2019theoretical}.
Here, using the marginal problem framework, we see that polygamy is a general phenomenon for space-time states.

Let us take the singlet state $R_s$ in Eq.~\eqref{eq:singlet} as an example, to construct a symmetry extension $R_{AB_1\cdots B_n}$ with $R_{AB_i}=R_s$.
Using the method given in theorem \ref{thm:HermMag}, the correlation tensor of the marginal solution can be denoted as $T^{\nu \mu_1\cdots \mu_n}$.
The requirement of $\Tr_{B_2,\cdots,B_n}R_
{AB_1\cdots B_n}=R_{s}$ implies that $T^{\nu \mu_1 0\cdots 0}=T_s^{\nu\mu_1}$, etc.
This gives us 
\begin{equation}
    R_{AB_1\cdots B_n}=\frac{1}{d^{n+1}}(\mathds{I}^{\otimes n+1}-\sum_i \Omega_i+\Xi),
\end{equation}
where $\Omega_i=X_A\otimes \mathds{I}\otimes \cdots  \otimes \mathds{I} \otimes X_{B_i} \otimes \mathds{I} \otimes \cdots \otimes \mathds{I}+Y_A\otimes \mathds{I}\otimes \cdots  \otimes \mathds{I} \otimes Y_{B_i} \otimes \mathds{I} \otimes \cdots \otimes \mathds{I}+Z_A\otimes \mathds{I}\otimes \cdots  \otimes \mathds{I} \otimes Z_{B_i} \otimes \mathds{I} \otimes \cdots \otimes \mathds{I}$, and $\Xi$ is a free parameter term.

\subsection{Classical quasi-probability marginal problem}

The classical probability distribution marginal problem is crucial for us to understand Bell nonlocality and quantum contextuality in the non-signaling and more general no-disturbance framework.
Not all measurement statistics admit a joint probability distribution that can reproduce the measurement statistics as marginal distributions.
The vanishing of  a joint probability distribution is a criterion of quantumness exhibited in a quantum behavior.
To construct a joint probability distribution for nonlocal or contextual behavior, we must introduce negativity to the distribution.
This inspires us to consider the more general quasi-probability marginal problem since we have shown that the quasi-probability distribution arises naturally from space-time states.
Here we elaborate on how to relate the problem with a space-time state marginal problem.

Consider three quasi-random variables $a,b,c$ (namely $p(a),p(b),p(c)$ are quasi-probabilities), and quasi-probability distribution $p(a,b)$, $p(b,c)$,
no-disturbance means that $\sum_{a}p(a,b)=\sum_c p(b,c)$, we will also say the $p(a,b)$, $p(b,c)$ are compatible with each other in this situation.
With this definition of compatibility, we introduce the following definition of the classical marginal scenario.

\begin{definition}
Consider a set of quasi-random variables $\Acal=\{X_1,\cdots,X_n\}$, then a classical marginal scenario $\mathfrak{M}_{\Acal}$ on $\Acal$ is a non-empty collection  $\{\Acal_1,\cdots \Acal_k\}$ of subsets of $\Acal$ together with a set of compatible quasi-probability distributions $\{p(X\in\Acal_i)\}_{i=1}^k$.
\end{definition}

The quasi-probability marginal problem asks: if there exists a joint quasi-probability distribution $p(X\in \Acal)$ of all quasi-random variables in $\Acal$ such that all quasi-probability distributions in the  marginal scenario can be reproduced as marginal distributions of $p(X\in \Acal)$.
A marginal scenario can be represented by a graph $G[\mathfrak{M}_{\Acal}]$ (usually called compatibility graph), where each quasi-random variable is drawn as a vertex and all $\Acal_i$ are drawn as a hyper-edge (or equivalently, a clique, draw edges such that all vertices in $\Acal_i$ are connected with each other).
A well-known result for classical probability marginal problem states that, if $G[\mathcal{M}_{\Acal}]$ is a chordal graph, there exists a solution to the marginal problem. The proof can be found in, e.g. \cite{Ramanathan2012generalized}.
This result also holds for the quasi-probability marginal problem, the proof can be generalized straightforwardly.
For a classical marginal scenario $\mathfrak{M}_{\Acal}$, if its compatibility graph $G[\mathcal{M}_{\Acal}]$ is a chordal graph, there always exists a solution to the marginal problem.

Now let's see how to transform a quasi-probability marginal problem into a space-time marginal problem.
The main tool we will use is a generalization of the classical states \cite{SLLuo2008}, which we will call a space-time classical state.
Consider an $n$-event set $\Acal$, for each event $E_i$ we choose a complete set of rank-1 orthonormal projector $\{\Pi_{a_i}\}_{a_i}$, then we take the quasi-probabilistic mixture of the tensor product of these projectors
\begin{equation}
W_{\Acal}=\sum_{a_1,\cdots,a_n}p(a_1,\cdots,a_n)\Pi_{a_i}\otimes \cdots \Pi_{a_n}.
\end{equation}
Notice that in definition \ref{def:W} in Appendix, the local states may not be orthogonal with each other, thus the space-time state will exhibit quantum correlations.
For a quasi-probability distribution, all its classical space-time states are related by local unitary operations,
\begin{equation}
    W'_{\Acal}=(\prod_i U_i) W_{\Acal} (\prod_i U_i^{\dagger}).
\end{equation}
For a classical marginal scenario $\mathfrak{M}_{\Acal}$, there is a corresponding set of classical space-time states 
$\{W_{\Acal_i}\}_{i=1}^k$. We can define the following classical space-time state marginal problem:

\begin{definition}[classical space-time state marginal problem]
For a set of classical space-time states 
$\{W_{\Acal_i}\}_{i=1}^k$ which are compatible with each other up to local unitary operations, find a classical space-time state $W_{\Acal}$ such that all $W_{\Acal_i}$ are local unitary equivalent the reduced states of $W_{\Acal}$.
\end{definition}

\begin{theorem}
The quasi-probability classical marginal problem for a marginal behavior $\mathfrak{M}_{\Acal}$ is equivalent to the classical space-time state marginal problem $\{W_{\Acal_i}\}$.   
\end{theorem}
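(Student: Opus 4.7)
The natural plan is to set up an explicit bijection between quasi-probability distributions on a set of events and equivalence classes of classical space-time states under local unitaries, and then check that this bijection intertwines marginalization with partial trace. First, given a quasi-probability distribution $p(a_1,\ldots,a_n)$ on an event set $\Acal$ and, for each event $E_j$, a choice of rank-$1$ orthonormal projector basis $\{\Pi_{a_j}\}_{a_j}$, we form the classical space-time state
\begin{equation}
    W_{\Acal} \;=\; \sum_{a_1,\ldots,a_n} p(a_1,\ldots,a_n)\,\Pi_{a_1}\otimes\cdots\otimes\Pi_{a_n}.
\end{equation}
Conversely, given a classical space-time state $W_{\Acal}$ built from projectors $\{\Pi_{a_j}\}$, orthonormality $\Tr(\Pi_{a_j}\Pi_{b_j}) = \delta_{a_j b_j}$ lets us recover
\begin{equation}
    p(a_1,\ldots,a_n) \;=\; \Tr\!\bigl[(\Pi_{a_1}\otimes\cdots\otimes\Pi_{a_n})\,W_{\Acal}\bigr].
\end{equation}
Different choices of rank-$1$ orthonormal basis at each site differ by a local unitary, so the map $p\mapsto W_{\Acal}$ is well defined at the level of local-unitary equivalence classes.

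The second step is to show that this correspondence intertwines the two notions of ``marginal''. Since each $\Pi_{a_j}$ is rank one, $\Tr \Pi_{a_j} = 1$, and therefore
\begin{equation}
    \Tr_{\Acal\setminus\Acal_i} W_{\Acal} \;=\; \sum_{\{a_j:E_j\in\Acal_i\}} \Bigl(\sum_{\{a_j:E_j\notin\Acal_i\}} p(a_1,\ldots,a_n)\Bigr)\bigotimes_{E_j\in\Acal_i}\Pi_{a_j},
\end{equation}
which is precisely the classical space-time state associated with the marginal quasi-probability distribution $p(X\in\Acal_i)$. Thus the compatibility condition on the classical side (equality of overlapping marginals) translates directly to compatibility of the associated space-time states, modulo the local unitary freedom in the choice of local projector bases.

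With these two ingredients the equivalence of the two marginal problems is obtained by running the argument in both directions. If a solution $p(X\in\Acal)$ to the quasi-probability marginal problem exists, the associated $W_{\Acal}$ is, by the partial-trace computation above, a solution of the classical space-time marginal problem for any compatible choice of local bases, with the local unitaries absorbing any basis mismatch between $W_{\Acal_i}$ and $\Tr_{\Acal\setminus\Acal_i} W_{\Acal}$. Conversely, any solution $W_{\Acal}$ yields, by the trace extraction formula, a joint quasi-probability distribution whose marginals reproduce the $p(X\in\Acal_i)$ after inverting the local unitaries that rotate the fixed bases of $W_{\Acal_i}$ into those of the reductions of $W_{\Acal}$.

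The main technical obstacle I anticipate is the careful bookkeeping of the local-unitary freedom: one must verify that compatibility ``up to local unitaries'' among the $W_{\Acal_i}$ is exactly what is needed so that the extracted quasi-probability distributions on the overlaps agree on the nose (since a local unitary sends one orthonormal rank-$1$ basis to another and hence merely relabels the outcomes $a_j$). Once this is handled, the remainder is the straightforward verification that partial trace corresponds to marginalization, which is immediate from $\Tr\Pi_{a_j}=1$.
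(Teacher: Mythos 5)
Your proposal is correct and follows essentially the same route as the paper's proof: construct $W_{\Acal}$ from a solution $p$ using arbitrary local rank-$1$ orthonormal projectors for one direction, and read off the joint quasi-probability distribution from a solution $W_{\Acal}$ for the other, with partial trace matching marginalization because $\Tr\Pi_{a_j}=1$. Your version merely spells out the partial-trace computation, the trace-extraction formula, and the local-unitary bookkeeping that the paper leaves implicit.
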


\begin{proof}
    We need to show that the existence of the quasi-probability marginal problem is equivalence to the existence of the classical space-time state marginal problem. 
    
    Suppose that $p(a_1,\cdots,a_n)$ is the solution of quasi-probability marginal problem for a marginal behavior $\mathfrak{M}_{\Acal}$, then we can choose arbitrary local complete rank-1 orthonormal projectors labeled as $\Pi_{a_i}$ and construct $W_{\Acal}=\sum_{a_1,\cdots,a_n} p(a_1,\cdots,a_n) \Pi_{a_1}\otimes \cdots \otimes \Pi_{a_n}$. It's easy to check that this is the solution of  classical space-time state marginal problem $\{W_{\Acal_i}\}$.

    For the other direction, suppose that $W_{\Acal}=\sum_{a_1,\cdots,a_n} q(a_1,\cdots,a_n)\Pi_{a_1}\otimes \cdots \otimes \Pi_{a_n}$ is a solution. The $W_{\Acal_i}$ is equal to $\Tr_{\Acal \setminus \Acal_i^c} W_{\Acal}$ up to local unitary operations implies that $\sum_{\Acal\setminus \Acal_i^c} q(a_1,\cdots,a_n) =p(a_{i_l}\in \Acal_i)$. This completes the proof.
\end{proof}

\section{Inferring global pseudo-density operator from reduced pseudo-density operator}
\label{sec:InfGlob}
Causal inference is the task of determining the causal structure underlying a set of random variables in the classical world \cite{pearl2009causality}, while its extension to the quantum realm is known as quantum causal inference. 
Recently, there has been a surge of interest in this field, with various approaches being explored based on different quantum causal models \cite{costa2016quantum,Giarmatzi2018a,bai2022quantum,cotler2019quantum,liu2023quantum}.
Our focus is on inferring the global causal structure from local causal structures. 
As we previously noted, the spectrum of a space-time state can be treated as a quasi-probability distribution.
Therefore, we introduce the concept of entropy for quasi-probability distributions and, by extension, for space-time states. 
Other attempts to introduce space-time entropy in alternative formalisms of space-time states and quantum stochastic processes are also underway \cite{cotler2018superdensity,lindblad1979non}, and exploring their interrelationships is a crucial topic for future research, which is left for our future study.
The space-time entropy of PDO provides a unified framework that encompasses both state entropy and dynamical entropy and has numerous potential applications. In this section, we present an information-theoretic approach for inferring the global space-time state from local reduced space-time states, using space-time entropy.

\subsection{Entropy of space-time states}

The concept of entropy indisputably plays a crucial role in modern physics.
Usually, the von Neumann entropy is defined on the spatial states $\varrho$, their spectra are regarded as a probability distribution $\Vec{\lambda}(\varrho)$. The von Neumann entropy $S(\varrho)$ is defined as the Shannon entropy $S(\Vec{\lambda}(\varrho))$ of $\Vec{\lambda}(\varrho)$.
Since space and time are treated equally in our framework, and the PDO $R$ represents our space-time state, we can naturally define the generalized von Neumann entropy as follows:
\begin{equation}
    S(R)=-\sum_i |\lambda_i| \log |\lambda_i|=-\Tr |R|\log |R|,
\end{equation}
where quasi-probabilities $\lambda_i$'s are eigenvalues of $R$ and $|R|=\sqrt{R^{\dagger}R}$.
When $R$ is a density operator, it becomes the von Neumann entropy.
The generalized R\'{e}nyi entropy is defined in a similar way
\begin{equation}
    S_{\alpha}(R)=\frac{1}{1-\alpha} \log \Tr |R|^{\alpha},
\end{equation}
and we have $\operatorname{lim}_{\alpha\to 1} S_{\alpha} (R)=S(R)$.

For two event sets $\Acal,\Bcal$, the conditional entropy, and mutual entropy are defined as follows
\begin{align}
    S(\Acal|\Bcal)=S({\Acal\Bcal})-S(\Bcal),\\
    I(\Acal:\Bcal)=S(\Acal)+S(\Bcal)-S(\Acal\Bcal),
\end{align}
where we denote $S(R_{\Acal})$ by $S(\Acal)$, and $R_{\Acal}$ is reduced PDO of $R_{\Acal \Bcal}$, etc.

The lack of positive semidefiniteness in space-time states results in the violation of several properties of entropy.
And this violation of properties is an indicator of the existence of temporal correlations. In this part, we will establish some properties of space-time entropy.

Recall that $\|R\|_{1}=\|\vec{\lambda}\|_1$ can be regarded as a causality monotone. More precisely, if we set $C(R)=(\|R\|_1-1)/2$, we see that \cite{fitzsimons2015quantum} 
\begin{enumerate}
    \item $C(R)\geq 0$ with $F(R)=0$ if $R$ is positive semidefinite (it also satisfies the normalization condition: $F(R)=1$ if $R$ is obtained from two consecutive measurements on a single qubit closed system).
    \item $C(R)$ is invariant under a local change of basis.
    \item $C(R)$  is non-increasing under local operations.
\end{enumerate}
We also have $\sum_i p_i C(R_i)\geq C(\sum_i p_i R_i)$. 
It's argued in \cite{Pisarczyk2019causal} that $F(R)=\log \| R\|_1$ is a causality monotone that is additive in the sense $F(R_1\otimes R_2)=F(R_1)+F(R_2)$, and $F(\sum_i p_iR_i)\leq \max_i \{F(R_i)\}$.
We now show that these two causality monotones appear naturally in the expression of the entropy of PDO.
From the spectrum quasi-probability distribution $\vec{\lambda}_R$ of $R$, we can construct a probability vector $\Vec{p}_R=(|\lambda_1|/\|R\|_1,\cdots,|\lambda_N|/\|R\|_1)$.
Then the Shannon entropy of $\Vec{p}_R$ is well-defined.
It's not difficult to check that 
\begin{equation} \label{eq:Entropy}
    S(\vec{\lambda}_R)=[2C(R)+1] [S(\Vec{p}_R)-F(R)].
\end{equation}
When there is no causality in $R$, $C(R)=F(R)=0$, we see $S(\vec{\lambda}_R)=S(\Vec{p}_R)$.
Thus, the equality above can also be used as a criterion for the existence of causality.

One of the primary purposes we introduce entropy of space-time states is to utilize the  generalized maximal entropy principle to infer the global space-time state from a set of reduced space-time states.
This is closely related to the space-time marginal problem.
However, to apply the maximal entropy principle, the entropy function should be 
upper-bounded. We now show that our definition of entropy satisfies this requirement.

\begin{theorem}
    For a given $n$-event set $\Acal$, the entropy is upper bounded, viz., there exists $K>0$ such that
    \begin{equation}
        S(R)\leq K, \quad \forall R\in \PDO(\Acal).
    \end{equation}
\end{theorem}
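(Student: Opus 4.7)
The plan is to bound $S(R) = -\sum_i |\lambda_i|\log|\lambda_i|$ term by term, exploiting the fact that the number of terms in this sum is fixed once the event set $\Acal$ is chosen. Let $N = \dim \mathcal{H}_\Acal = \prod_{e\in\Acal} d_e$; every $R\in\PDO(\Acal)$ has exactly $N$ real eigenvalues $\lambda_1,\ldots,\lambda_N$, which may be negative but always satisfy $\sum_i\lambda_i = 1$.

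First I would reduce the problem to a scalar analysis of $f:[0,\infty)\to\mathbb{R}$ defined by $f(x) = -x\log x$ (with $f(0):=0$ by convention), noting that
\[
S(R) \;=\; \sum_{i=1}^{N} f(|\lambda_i|).
\]
A one-variable calculus check shows that $f'(x) = -\log x - 1$ vanishes uniquely at $x = 1/e$, with $f''(1/e) < 0$, so $f$ attains its global maximum on $[0,\infty)$ at $x = 1/e$, with value $f(1/e) = 1/e$. Crucially, for $x > 1$ one has $f(x) < 0$, so eigenvalues of modulus exceeding one can only \emph{decrease} $S(R)$ rather than make it diverge. Hence $f(|\lambda_i|) \leq 1/e$ for every $i$, and summation gives $S(R) \leq N/e$. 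Setting $K = N/e$ completes the proof.

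The main subtlety, rather than a genuine obstacle, is recognising that the familiar Shannon-style bound $S \leq \log N$ is \emph{not} directly applicable here: since $R$ is only Hermitian and trace-one, the $|\lambda_i|$ do not form a probability distribution but only satisfy $\sum_i |\lambda_i| = \|R\|_1 \geq 1$, so applying Jensen's inequality to the vector $\vec{\lambda}_R$ does not conclude. The uniform elementwise bound $\max_{x\geq 0}(-x\log x) = 1/e$ circumvents this entirely. A tighter, $\|R\|_1$-dependent estimate could be extracted from Eq.~\eqref{eq:Entropy} in the form $S(R) \leq \|R\|_1\log(N/\|R\|_1)$, together with a dimensional bound on $\|R\|_1$ obtained from the fact that the Pauli correlators $T^{\mu_1\cdots\mu_n}$ are bounded by $1$ in modulus; however, such refinements are not required for the qualitative boundedness asserted by the theorem.
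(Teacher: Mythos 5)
Your proof is correct, and it takes a genuinely different and more elementary route than the paper. The paper first bounds the trace norm, $\|R\|_1\leq d^n$, using $\|\sigma_\mu\|_{\sup}\leq\sqrt{d}$ and the boundedness of the correlators $|T^{\mu_1\cdots\mu_n}|\leq d^{n/2}$, and then feeds this into the factorization $S(R)=\|R\|_1\bigl[S(\vec{p}_R)-\log\|R\|_1\bigr]$ of Eq.~\eqref{eq:Entropy} together with the Shannon bound $S(\vec{p}_R)\leq n\log d$. Your argument bypasses all of this: the pointwise bound $-x\log x\leq 1/e$ on $[0,\infty)$ plus the fact that there are only $N=d^n$ eigenvalues immediately gives $S(R)\leq N/e$, using nothing about $R$ beyond Hermiticity and finite dimension --- so your bound in fact holds on all of $\Herm_1(\Acal)$, not just $\PDO(\Acal)$. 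It is worth noting that your constant is not worse: optimizing the paper's refined estimate $S(R)\leq\|R\|_1\log(N/\|R\|_1)$ over $\|R\|_1\in[1,N]$ also yields $N/e$, attained at $\|R\|_1=N/e$, whereas the paper's literal concluding sentence (that $S(R)$ is bounded ``by the same number'' $n\log d$) glosses over the prefactor $\|R\|_1=2C(R)+1$, which can be as large as $d^n$; your term-by-term argument avoids this looseness entirely. Your diagnosis that the naive Shannon bound fails because $\sum_i|\lambda_i|=\|R\|_1\geq 1$ is exactly the right observation and is precisely why the paper routes through the normalized vector $\vec{p}_R$. One small caveat on your closing side remark: with the paper's normalization $\Tr(\sigma_\mu\sigma_\nu)=d\,\delta_{\mu\nu}$ the correlators are only guaranteed to satisfy $|T^{\mu_1\cdots\mu_n}|\leq d^{n/2}$, not $\leq 1$ (the unit bound holds for qubit Pauli operators but not for general Hilbert--Schmidt bases with this normalization); since that remark is explicitly not used in your proof, it does not affect its validity.
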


\begin{proof}
    Notice that $\Tr \sigma_{\mu}^2=d$, the sup-norm of $\sigma_{\mu}$ satisfies $\|\sigma_{\mu}\|_{\rm sup}\leq d^{1/2}$. Since $T^{\mu_1,\cdots,\mu_n}=\langle\{\sigma_{\mu_1},\cdots,\sigma_{\mu_n}\}\rangle$, $|T^{\mu_1,\cdots,\mu_n}|\leq d^{n/2}$.
    Then using triangle inequality of sup-norm, we see $\|R\|_1\leq d^n$ for all $R$.
    For any $d^n$-dimensional probability vector $\Vec{p}$, the Shannon entropy is upper bounded by $\log d^n=n\log d$.
    Then from Eq.~\eqref{eq:Entropy} we see that for all $R$, $S(R)$ is upper bounded by the same number.
\end{proof}

For the convenience of our later discussion, we also introduce the space-time relative entropy between $R_1,R_2\in \PDO(\Acal)$:
\begin{equation}
    S(R_1||R_2)=\Tr (|R_1|\log |R_1|) -\Tr (|R_1|\log |R_2|),
\end{equation}
where $|R_i|=\sqrt{R_i^{\dagger}R_i}$ for $i=1,2$.
Recall that Klein's inequality claims that, for the convex real function $f$ with derivative $f'$ and Hermitian operators $A,B$, we have
\begin{equation}
    \Tr[f(A)-f(B)-(A-B)f'(B)]\geq 0.
\end{equation}
Take $f(x)=x\log x$ we obtain
\begin{equation} \label{eq:Klein}
    \Tr A \log A-\Tr A \log B \geq \Tr (A-B).
\end{equation}
Set $A=|R_1|$, $B=|R_2|$, we obtain the lower bound of space-time entropy
\begin{equation}\label{eq:QrelEntropy}
    S(R_1||R_2)\geq \Tr (|R_1|-|R_2|)=2(C(R_1)-C(R_2)).
\end{equation}
When $R_1$ and $R_2$ are spatial states ($C(R_1)=C(R_2)=0$), it implies the non-negativity of the quantum relative entropy of the density matrix. 
And for PDOs such that the amount of causality of $R_1$ is greater than or equal to that of $R_2$, the relative entropy is also non-negative.

Recall that Lieb's concavity theorem claims that for any matrix $X$ and $0\leq t \leq 1$, the function 
\begin{equation}
    f(A,B)=\Tr (X^{\dagger}A^tXB^{1-t})
\end{equation}
is jointly concave in positive semidefinite $A$ and $B$.
Set $G_t(A,X)=\Tr (X^{\dagger} A^t X A^{1-t}) -\Tr (X^{\dagger} X A)$, Lieb's theorem implies that $G_t(A,X)$ is concave in positive semidefinite $A$, and this further implies that $G'_0(A,X)=\frac{d}{dt} G_t(A,X)|_{t=0}=\Tr (X^{\dagger} (\log A) X A)-\Tr(X^{\dagger}X(\log A)A)$ is concave in positive semidefinite $A$.
Then set
\begin{equation}
    \begin{aligned}
        A=\left(
\begin{array}{cc}
    |R_1| & 0 \\
     0 &  |R_2|
\end{array}
        \right)
    \end{aligned}, \quad  \begin{aligned}
        X=\left(
\begin{array}{cc}
    0 & 0 \\
    \mathds{I} &  0
\end{array}
        \right)
    \end{aligned},
\end{equation}
we see that $G'_0(A,X)=-S(R_1||R_2)$.
From this, we see that space-time relative entropy is joint convex in $|R_1|$ and $|R_2|$.

\begin{theorem}
    The entropy $S(R)$ of space-time state $R\in \PDO(\Acal)$ satisfy the following properties:
\begin{enumerate}
    \item Unitary invariant: $S(URU^{\dagger})=S(R)$ where $U$ is unitary operator.
    \item Weak additivity:  $S(R_1\otimes R_2)=[2C(R_2)+1] S(R_1)+[2C(R_1)+1]  S(R_2)$.
    \item Weak concavity: $\alpha S(R_1)+ (1-\alpha) S(R_2) \leq S(\alpha |R_1|+(1-\alpha) |R_2|)$.
\end{enumerate}  
\end{theorem}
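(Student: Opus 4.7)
The three properties are of increasing depth, but each reduces to a standard fact about the scalar function $f(x)=-x\log x$ together with the observation that $S(R)$ depends on $R$ only through the spectrum of $|R|$. I will treat them in order.

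Unitary invariance is immediate once one notes that $|URU^\dagger|=U|R|U^\dagger$, which follows because $(URU^\dagger)^\dagger(URU^\dagger)=UR^\dagger R U^\dagger$ and the positive square root commutes with unitary conjugation. Hence $|URU^\dagger|$ has the same spectrum as $|R|$, and since $S$ is defined purely via $-\sum_i|\lambda_i|\log|\lambda_i|$, property (1) follows at once.

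For weak additivity, I would diagonalize $R_1$ and $R_2$ and use that the eigenvalues of $R_1\otimes R_2$ are the products $\lambda_i^{(1)}\lambda_j^{(2)}$, with absolute values factoring as $|\lambda_i^{(1)}||\lambda_j^{(2)}|$. Expanding the logarithm as $\log|\lambda_i^{(1)}|+\log|\lambda_j^{(2)}|$ and separating the resulting double sum yields
\begin{equation*}
S(R_1\otimes R_2)=\Bigl(\sum_j|\lambda_j^{(2)}|\Bigr)S(R_1)+\Bigl(\sum_i|\lambda_i^{(1)}|\Bigr)S(R_2),
\end{equation*}
which equals $\|R_2\|_1 S(R_1)+\|R_1\|_1 S(R_2)$. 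Substituting $\|R_k\|_1=2C(R_k)+1$ gives the stated identity.

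The main work is (3). The plan is to invoke operator concavity of $A\mapsto -\Tr A\log A$ on positive semidefinite $A$, which is a standard consequence of concavity of the scalar map $f(x)=-x\log x$ (and also recoverable from Klein's inequality~\eqref{eq:Klein}, or from the Lieb concavity argument already used in the excerpt). Applied to $A_\alpha=\alpha|R_1|+(1-\alpha)|R_2|$, which is positive because each $|R_k|$ is, concavity gives
\begin{equation*}
-\Tr A_\alpha\log A_\alpha\geq -\alpha\Tr|R_1|\log|R_1|-(1-\alpha)\Tr|R_2|\log|R_2|.
\end{equation*}
By definition $S(R_k)=-\Tr|R_k|\log|R_k|$, and since $A_\alpha$ is positive one has $|A_\alpha|=A_\alpha$, so that $S(A_\alpha)=-\Tr A_\alpha\log A_\alpha$; the inequality above is therefore exactly the claim. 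The one conceptual subtlety — and the closest thing to an obstacle — is that concavity is asserted in the variables $|R_1|,|R_2|$ rather than in $R_1,R_2$ themselves; a naive convex combination of PDOs on the right-hand side would not work, because the absolute value is nonlinear in $R$ and PDOs may be indefinite.
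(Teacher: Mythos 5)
Your proof is correct and follows essentially the same route as the paper: parts (1) and (2) are the spectral computations the paper dismisses as straightforward, and for part (3) the concavity of $A\mapsto-\Tr A\log A$ on positive operators that you invoke is exactly what the paper establishes by applying Klein's inequality \eqref{eq:Klein} twice (with $A=|R_1|$ and $A=|R_2|$ against $B=\alpha|R_1|+(1-\alpha)|R_2|$) and taking the convex combination. Your closing remark that the convex combination must be taken in $|R_1|,|R_2|$ rather than in the indefinite $R_1,R_2$ correctly identifies why the statement is only a \emph{weak} concavity.
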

    
\begin{proof}
    The proofs of 1 and 2 are straightforward.

    3. Set $A=|R_1|$ and $B=\alpha |R_1|+(1-\alpha)|R_2|$ in Eq.~\eqref{eq:Klein}, we obtain 
    \begin{equation}
        \begin{aligned}
            &\Tr |R_1|\log |R_1|-\Tr |R_1| \log \alpha |R_1|+(1-\alpha)|R_2|\\
    \geq &(1-\alpha) \Tr (|R_1|-|R_2|).
        \end{aligned}
    \end{equation}
    Similarly, set $A=|R_2|$ and $B=\alpha |R_1|+(1-\alpha)|R_2|$ in Eq.~\eqref{eq:Klein}, we obtain 
        \begin{equation}
        \begin{aligned}
            &\Tr |R_2|\log |R_2|-\Tr |R_2| \log \alpha |R_1|+(1-\alpha)|R_2|\\
     \geq & \alpha \Tr (|R_2|-|R_1|).
        \end{aligned}
    \end{equation}
   Multiplying the first inequality by $\alpha$ and the second by $(1- \alpha)$ and adding them yields the conclusion.
\end{proof}

\begin{theorem}[weak subadditivity]\label{thm:wsubadd}
For a PDO $R_{\Acal\Bcal}$, let $\Delta(R_{\Acal\Bcal})=\Tr[ (|R_{\Acal\Bcal}|-|R_{\Acal}|\otimes |R_{\Bcal}|)\log (|R_{\Acal}|\otimes |R_{\Bcal}|)]$  we have the following weak subadditivity relation
\begin{equation}\label{eq:subAdd}
    \begin{aligned}
        & S(R_{\Acal})+S(R_{\Bcal})-S(R_{\Acal \Bcal}) \\
        \geq & \Delta(R_{\Acal\Bcal})+ \Tr (|R_{\Acal\Bcal}|-|R_{\Acal}|\otimes |R_{\Bcal}|).
    \end{aligned}
\end{equation}
Notice that, when reduced PDOs $R_{\Acal}$ and $R_{\Bcal}$ are positive semidefinite, we see that $\Delta(R_{\Acal\Bcal})=0$; when $R_{\Acal,\Bcal}$ is positive semidefinite, the right-hand side of the above expression becomes zero. 
\end{theorem}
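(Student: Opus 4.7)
The plan is to derive the bound directly from Klein's inequality \eqref{eq:Klein}, which is already the workhorse behind the lower bound \eqref{eq:QrelEntropy} on space-time relative entropy. Applied to the positive semidefinite operators $A=|R_{\Acal\Bcal}|$ and $B=|R_\Acal|\otimes|R_\Bcal|$ on $\mathcal{H}_\Acal\otimes\mathcal{H}_\Bcal$ (the tensor-product absolute value factors as $|R_\Acal|\otimes|R_\Bcal|$ because each marginal is Hermitian), Klein's inequality yields
\begin{equation}
\Tr|R_{\Acal\Bcal}|\log|R_{\Acal\Bcal}| - \Tr|R_{\Acal\Bcal}|\log(|R_\Acal|\otimes|R_\Bcal|) \geq \Tr(|R_{\Acal\Bcal}| - |R_\Acal|\otimes|R_\Bcal|).
\end{equation}
Recognizing the first term on the left as $-S(R_{\Acal\Bcal})$ and transposing the cross-entropy term, this rearranges to an upper bound on $S(R_{\Acal\Bcal})$ in terms of a cross-entropy and a trace-norm discrepancy.

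The next move is to add $S(R_\Acal)+S(R_\Bcal)$ to both sides, so that the LHS assembles into the target quantity $S(R_\Acal)+S(R_\Bcal)-S(R_{\Acal\Bcal})$. It then suffices to identify the resulting RHS with $\Delta(R_{\Acal\Bcal}) + \Tr(|R_{\Acal\Bcal}|-|R_\Acal|\otimes|R_\Bcal|)$. For this I would use the operator identity $\log(|R_\Acal|\otimes|R_\Bcal|) = \log|R_\Acal|\otimes I + I\otimes\log|R_\Bcal|$, valid on the joint support, to expand the defining trace of $\Delta$ as
\begin{equation}
\Delta = \Tr|R_{\Acal\Bcal}|\log(|R_\Acal|\otimes|R_\Bcal|) - \Tr\bigl[(|R_\Acal|\otimes|R_\Bcal|)\log(|R_\Acal|\otimes|R_\Bcal|)\bigr],
\end{equation}
so that the desired match reduces to relating the second subtracted term with $-S(R_\Acal)-S(R_\Bcal)$.

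The main obstacle lies precisely in this last bookkeeping step. A direct expansion gives $\Tr[(|R_\Acal|\otimes|R_\Bcal|)\log(|R_\Acal|\otimes|R_\Bcal|)] = \|R_\Bcal\|_1\,\Tr|R_\Acal|\log|R_\Acal| + \|R_\Acal\|_1\,\Tr|R_\Bcal|\log|R_\Bcal|$, which collapses to $-S(R_\Acal)-S(R_\Bcal)$ only when $\|R_\Acal\|_1=\|R_\Bcal\|_1=1$, that is, when both marginals are positive semidefinite. For genuine PDOs the surplus factors $\|R_\Acal\|_1-1 = 2C(R_\Acal)$ and $\|R_\Bcal\|_1-1 = 2C(R_\Bcal)$ carry physical meaning as causality monotones, and these extra contributions must be tracked carefully and shown to be absorbed consistently into $\Delta$. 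The two consistency cases highlighted after the statement — vanishing of $\Delta$ when $R_{\Acal\Bcal}$ is PSD (so both marginals are PSD with unit trace, erasing the surplus), and simultaneous vanishing of the trace discrepancy $\Tr(|R_{\Acal\Bcal}|-|R_\Acal|\otimes|R_\Bcal|)$ in the same limit — serve as crucial sanity checks, and they confirm that ordinary subadditivity $S(\Acal)+S(\Bcal)\geq S(\Acal\Bcal)$ is recovered precisely in the purely spatial regime, while the extra terms on the RHS quantify the entropic price of genuine temporal correlations.
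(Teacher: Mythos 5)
Your route is exactly the paper's: the published proof is the one-liner ``take $R_1=R_{\Acal\Bcal}$ and $R_2=R_{\Acal}\otimes R_{\Bcal}$ in Eq.~\eqref{eq:QrelEntropy}'', i.e.\ Klein's inequality with $A=|R_{\Acal\Bcal}|$ and $B=|R_{\Acal}|\otimes|R_{\Bcal}|$, which is precisely your first display. The difference is that you carry out the bookkeeping the paper skips, and the obstacle you flag at the end is real, not something that can be ``absorbed into $\Delta$'': $\Delta$ is a fixed expression, and expanding $\log(|R_\Acal|\otimes|R_\Bcal|)=\log|R_\Acal|\otimes \mathds{I}+\mathds{I}\otimes\log|R_\Bcal|$ in its definition gives
\begin{equation}
\Tr\bigl[|R_{\Acal\Bcal}|\log(|R_\Acal|\otimes|R_\Bcal|)\bigr]=\Delta(R_{\Acal\Bcal})-\|R_\Bcal\|_1\,S(R_\Acal)-\|R_\Acal\|_1\,S(R_\Bcal),
\end{equation}
so that Klein's inequality yields
\begin{equation}
\|R_\Bcal\|_1 S(R_\Acal)+\|R_\Acal\|_1 S(R_\Bcal)-S(R_{\Acal\Bcal})\;\geq\;\Delta(R_{\Acal\Bcal})+\Tr\bigl(|R_{\Acal\Bcal}|-|R_\Acal|\otimes|R_\Bcal|\bigr),
\end{equation}
whose left-hand side is $S(R_\Acal\otimes R_\Bcal)-S(R_{\Acal\Bcal})$ by the paper's own weak-additivity property (recall $2C(R)+1=\|R\|_1$). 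This coincides with the stated $S(R_\Acal)+S(R_\Bcal)-S(R_{\Acal\Bcal})$ only when $\|R_\Acal\|_1=\|R_\Bcal\|_1=1$, i.e.\ when both marginals are positive semidefinite.

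The residual term $2C(R_\Bcal)S(R_\Acal)+2C(R_\Acal)S(R_\Bcal)$ has no definite sign in general --- $C\geq 0$ always, but $S(R)=-\Tr|R|\log|R|$ can be negative once $|R|$ has eigenvalues exceeding $1$ --- so the inequality as printed does not follow from the one you actually derive without further hypotheses. Your proposal is therefore incomplete as a proof of the theorem \emph{as stated}, but the incompleteness is inherited from the paper: the clean statement that Klein's inequality delivers replaces $S(R_\Acal)+S(R_\Bcal)$ by $S(R_\Acal\otimes R_\Bcal)$ on the left-hand side (equivalently, keeps the weights $\|R_\Bcal\|_1,\|R_\Acal\|_1$), and reduces to the printed form exactly in the positive-semidefinite regime you single out. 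Either restate the bound in that form or add the hypothesis that the marginals are positive semidefinite; do not claim the surplus can be absorbed into $\Delta$.
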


\begin{proof}
        This can be proved by taking $R_1=R_{\Acal\Bcal}$ and $R_2=R_{\Acal}\otimes R_{\Bcal}$ in Eq.~\eqref{eq:QrelEntropy}.
\end{proof}

These entropic inequalities also impose some constraints on the space-time marginal problems.
When restricted to the spatial density operators, they give entropic constraints for the classical \cite{fritz2012entropic} and  quantum state marginal \cite{osborne2008entropic,carlen2013extension}.
Its extension in the quantum channel marginal problem is also briefly discussed in \cite{haapasalo2021quantum}.

Let's take temporal two-event qubit PDO as an example to calculate the entropy. Set $\rho(t_1)=(\mathds{I}+\vec{r}\cdot \vec{\sigma})/2$ with $\vec{r}=(r_1,r_2,r_3)$ and set $\mathcal{E}^{t_1\to t_2}=\operatorname{id}$. From Eq.~\eqref{eq:PDOQubit}, we obtain the spectrum quasi-probability distribution
\begin{equation}
    \vec{\lambda}_R=(-\frac{1}{2},\frac{1}{2},\frac{1}{2}(1-\|\vec{r}\|), \frac{1}{2}(1+\|\vec{r}\|)).
\end{equation}
We see that entropy satisfies $0\leq S(R)\leq 2$ (see Fig. \ref{fig:EntropyQubit}).

\begin{figure}[t]
    \centering
     \includegraphics[scale=0.60]{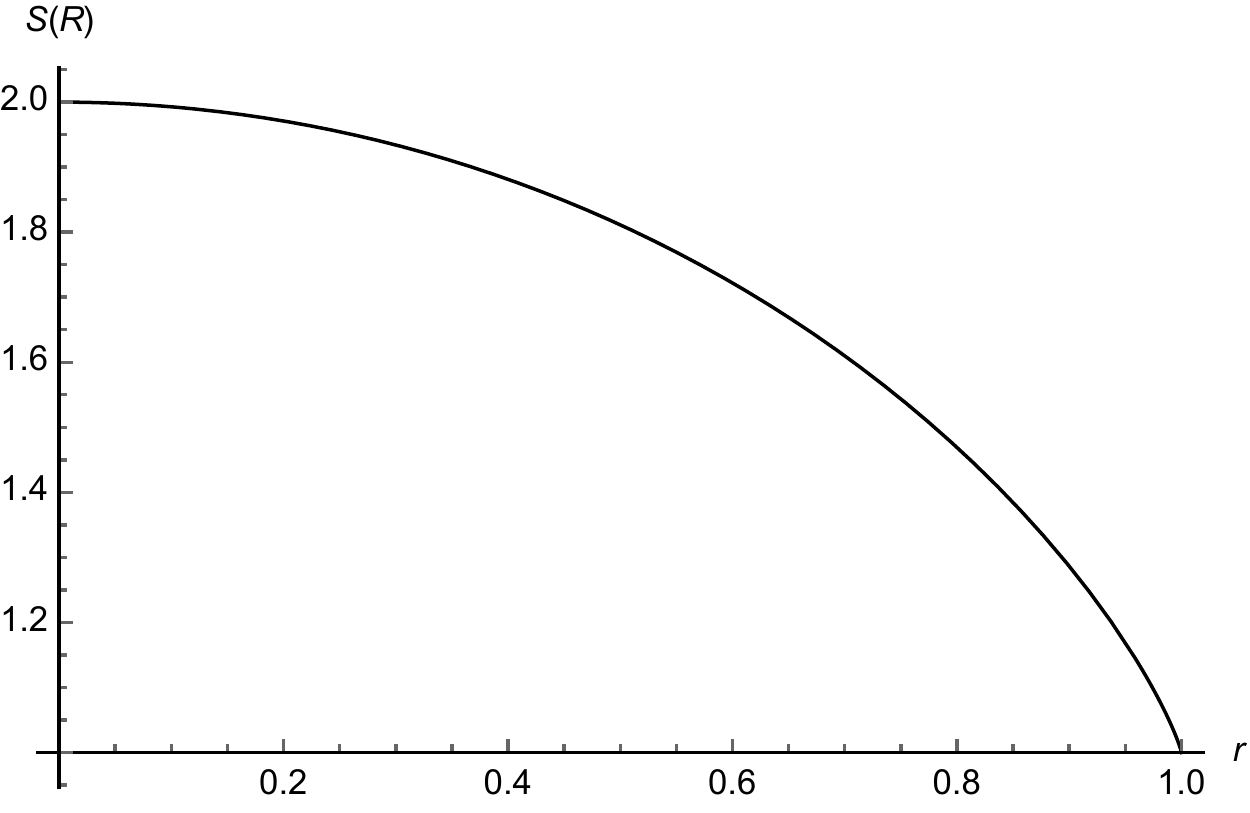}
    \caption{The space-time entropy of two-event PDO for two consecutive measurements over a qubit state, where $r$ is the norm of the Bloch vector.}
    \label{fig:EntropyQubit}
\end{figure}

We would like to stress that in the space-time state formalism, the information on the dynamical process of spatial states is also contained in the space-time states. Thus the entropy of space-time states can also be used to investigate the dynamical entropy.

\subsection{Inferring global pseudo-density operator via maximum entropy principle}

It's also possible to extend various existing generalizations of the concept of entropy of density operators to space-time states.
Regardless of a particular generalization of the concept of entropy, the key usage of entropy crucially hinges on the maximum entropy principle and its various deductions \cite{Jaynes1957,Jaynes1957a}.
Here we propose the following space-time maximum entropy principle:

\begin{principle}[Space-time maximum entropy principle]
    For a given set of constraints $\{L_k(R)=0\}$ of space-time state $R$, the best inference of space-time state is the one that maximizes the entropy $S(R)$ subject to these constraints.
    More precisely, using the Lagrange multiplier method, it's the one that maximizes the functional ${L}(R)=S(R)-\sum_k \alpha_k L_k(R)$.
\end{principle}

Let's now apply the space-time maximum entropy principle to the marginal problem, viz., given a collection of reduced space-time states, we try to infer the best global space-time state that can reproduce these reduced states as marginals.
Suppose that we have the information of a set of marginal space-time states $\mathcal{M}=\{R_{\Acal_k}\}$, then there is a set of constraints
\begin{equation}
  L_k(R_{\Acal}) = \Tr_{\Acal\setminus\Acal_k} R_{\Acal}-R_{\Acal_k}=0.
\end{equation}
Utilizing the maximum entropy principle we can infer the global space-time state as
\begin{equation}
    R_{\Acal}^{\mathcal{M}}=\operatorname{argmax}_{R_{\Acal}}\{{L}(R_{\Acal})=S(R_{\Acal})-\sum_k \alpha_k L_k(R_{\Acal}) \}.
\end{equation}
This optimization problem can be solved using different methods, in the Appendix, we will introduce the neural network approach.
Notice that, in the space-time scenario, the space-time state we obtain is usually not unique.
For example, consider two single-event states $R_{A}=R_{B}=\mathds{I}/2$, the entropies of both the spatial states $R_s=R_A
\otimes R_A$ and the PDO in Eq.~\eqref{eq:PDOQubit} by setting $\rho(t_1)=\mathds{I}/2$ and $\mathcal{E}^{t_1\to t_2}=\operatorname{id}$ reach maximum value $2$.
This reflects the fact that non-overlapping marginal space-time states are not enough to determine the global space-time state.

Notice that $R_{\Acal}^{\mathcal{M}}$ is our best inference of the global space-time state with the local information $\mathcal{M}=\{R_{\Acal_k}\}$ in hand.
One interesting application of this fact is that the space-time correlation exhibit in $R_{\Acal}$ that beyond the information contained in $\mathcal{M}$ can be characterized by comparing the original $R_{\Acal}$ with our inference $R_{\Acal}^{\mathcal{M}}$.
Consider a space-time state $R_{\Acal}$, let's denote all its $k$-event reduced space-time states as $\mathcal{M}_k$.
Using the maximum entropy principle, we obtain the corresponding inference $R_{\Acal}^{\mathcal{M}_k}$.
Then for a given norm of operators $\|\cdot \|$, we define $C_k=\| R_{\Acal} -R_{\Acal}^{\mathcal{M}_k} \|$.
If $C_k>0$, $R_{\Acal}$ exhibits the genuine $(k+1)$-event space-time correlations, namely, the space-time correlation of $R_{\Acal}$ cannot be recovered with only $k$-reduced space-time state information.

\section{Quantum pseudo-channel marginal problem}

The dynamics of a PDO are characterized by a quantum pseudo-channel (QPC), see Appendix~\ref{sec:QPC} for a detailed discussion. In this section, we will consider the marginal problem of the QPC and demonstrate that it can be transformed into a PDO marginal problem through channel-state duality.


The notion of marginal marginal quantum channel is introduced in \cite{Hsieh2022quantum}.
This can be naturally generalized to the QPC.
Suppose that $\Acal$ and $\Bcal$ are input and out event sets of the QPC $\Phi_{\Bcal|\Acal}$.
The marginal is defined with respect to a bipartition of both the input and output event sets.
Let $\mathcal{X}\subset \Acal$ and $\mathcal{Y}\subset \Bcal$, the marginal QPC $\Phi_{\mathcal{Y}|\mathcal{X}}$ is defined as follows:
for arbitary $R_{\Acal}\in \PDO(\Acal)$ we have 
\begin{equation}\label{eq:magC}
    \Tr_{\mathcal{Y}^c} \Phi_{\Bcal|\Acal}(R_{\Acal})= \Phi_{\mathcal{Y}|\mathcal{X}} (\Tr_{\mathcal{X}^c} (R_{\Acal})),
\end{equation}
where $\mathcal{X}^c$ and $\mathcal{Y}^c$ are complements of $\mathcal{X}$ and $\mathcal{Y}$ in $\Acal$ and $\Bcal$. 
We will denote this marginal QPC as $\Tr_{\mathcal{Y}^c|\mathcal{X}^c} \Phi_{\Bcal|\Acal} =\Phi_{\mathcal{Y}|\mathcal{X}}$.

Hereinafter, for convenience of discussion, we will use a normalized Choi-Jamio{\l}kowski representation of $\Phi_{\Bcal|\Acal}$, 
\begin{equation}
    J(\Phi_{\Bcal|\Acal})=\frac{1}{d_{\Acal}} \Phi_{\Bcal|\Acal}(E_{ij})\otimes E_{ij}.
\end{equation}
It's clear that $\Phi_{\Bcal|\Acal}(R)/d_{\Acal}=\Tr_{\Acal} [J(\Phi_{\Bcal|\Acal}) (\mathds{I} \otimes R^T)]$.
We will call this correspondence channel-state duality.
Using the channel state duality, we can translate this defining condition \eqref{eq:magC} into a state form (see, e.g., \cite[Appendix A]{Hsieh2022quantum} and references therein)
\begin{equation}
    \Tr_{\Ycal^c} J(\Phi_{\Bcal|\Acal})=J(\Phi_{\Ycal|\Xcal}) \otimes \frac{\mathds{I}_{\Xcal^c}}{d_{\Xcal^c}}.
\end{equation}
Since we take a different convention for the Choi-Jamio{\l}kowski map, there is no dimension factor here in our expression.
This implies that the Choi map for the marginal channel is indeed the marginal state $J(\Phi_{\Ycal|\Xcal})=\Tr_{\Ycal^c|\Xcal^c} J(\Phi_{\Bcal|\Acal})$.

Let us now show that the QPC marginal problem can be transformed into a space-time state marginal problem.
Then we can invoke the results for PDO marginal problem to investigate the QPC marginal problem.

\begin{definition}[QPC marginal problem]
    Given a collection of QPC $\{\Phi_{\Bcal_i|\Acal_i}\}$, suppose that they are compatible with each other, the QPC marginal problem asks if there exists a global QPC from event set $\Acal=\cup_i\Acal_i$ to $\Bcal=\cup_i\Bcal_i$ which can reproduce all QPCs by taking marginals. 
\end{definition}

From channel-state duality, $J(\Phi_{\Bcal|\Acal})$ is Hermitian if and only if $\Phi_{\Bcal|\Acal}$ is HP.
$\Phi_{\Bcal|\Acal}$ is TP implies that
$\Tr_{\Bcal}J(\Phi_{\Bcal|\Acal})=\mathds{I}_{\Acal}/d_{\Acal}$, thus $\Tr J(\Phi_{\Bcal|\Acal})=1$. 
When $\Phi_{\Bcal|\Acal}$ is HPTP, $J(\Phi_{\Bcal|\Acal})\in \Herm_1$.
From the previous discussion, we see that the compatibility of two QPCs on their overlap is indeed the same as the compatibility of states corresponding to them.
The observations and findings discussed above can be summarized as the following result:

\begin{theorem}[$\HPTP$ marginal problem]
    For a collection of compatible QPC $\{\Phi_{\Bcal_i|\Acal_i}\}$, there always exists a solution for the marginal problem in $\HPTP(\Acal,\Bcal)$. 
\end{theorem}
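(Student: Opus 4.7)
The plan is to lift the QPC marginal problem to a $\Herm_1$ marginal problem via the channel–state duality, augment it with a trace-preservation marginal, and then invoke Theorem~\ref{thm:HermMag}. First I would dualize: for each QPC in the collection, form the normalized Choi operator $J_i := J(\Phi_{\Bcal_i|\Acal_i}) \in \Herm_1(\Acal_i \cup \Bcal_i)$, which is Hermitian by HP and trace-one by TP. The identity $\Tr_{\Ycal^c|\Xcal^c} J(\Phi_{\Bcal|\Acal}) = J(\Phi_{\Ycal|\Xcal})$ recorded just before the theorem translates pairwise QPC compatibility on overlapping input/output event sets into the usual $\Herm_1$ compatibility of the $J_i$, viewed as prospective reductions of an unknown global state on the combined event set $\Acal \cup \Bcal$.

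Second, I would augment the marginal scenario to encode trace preservation. Since every $\Phi_{\Bcal_i|\Acal_i}$ is TP, we have $\Tr_{\Bcal_i} J_i = \mathds{I}_{\Acal_i}/d_{\Acal_i}$, so I append the extra marginal $J_0 := \mathds{I}_{\Acal}/d_{\Acal} \in \Herm_1(\Acal)$ on the full input set $\Acal$. This $J_0$ is compatible with every $J_i$: tracing $J_0$ over $\Acal \setminus \Acal_i$ yields $\mathds{I}_{\Acal_i}/d_{\Acal_i}$, matching $\Tr_{\Bcal_i} J_i$. Applying Theorem~\ref{thm:HermMag} to the augmented consistent scenario $\{J_0, J_1, \ldots, J_n\}$ then yields a global $J \in \Herm_1(\Acal \cup \Bcal)$ that reproduces each $J_i$ as a marginal and satisfies $\Tr_{\Bcal} J = \mathds{I}_{\Acal}/d_{\Acal}$. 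Interpreting $J$ as a normalized Choi operator, Hermiticity gives HP and the input-identity marginal gives TP; hence the associated map $\Phi_{\Bcal|\Acal}$ lies in $\HPTP(\Acal, \Bcal)$, and pushing the state-level marginal conditions back through channel–state duality delivers $\Tr_{\Bcal_i^c|\Acal_i^c} \Phi_{\Bcal|\Acal} = \Phi_{\Bcal_i|\Acal_i}$ for every $i$.

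The main obstacle I expect is verifying that the inductive tensor-filling construction behind Theorem~\ref{thm:HermMag} genuinely accommodates the extra constraint $J_0$ without clashing with the constraints imposed by the $J_i$. Concretely, $J_0$ fixes exactly those components of the global correlation tensor with all output-side Pauli indices equal to zero, whereas the free "slack" entries exploited in the proof of Theorem~\ref{thm:HermMag} are those with no vanishing indices; the two sets of constraints are disjoint, so the construction should go through. Carefully checking this disjointness, and confirming that no further hidden compatibility obligation among $\{J_0\}\cup\{J_i\}$ arises beyond the pairwise ones already verified, is the only nontrivial verification.
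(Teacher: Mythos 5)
Your proposal is correct and follows the same route as the paper: both arguments pass to normalized Choi operators, invoke Theorem~\ref{thm:HermMag} to solve the resulting $\Herm_1$ marginal problem, and then enforce trace preservation before dualizing back. The only difference is how the TP constraint $\Tr_{\Bcal}J=\mathds{I}_{\Acal}/d_{\Acal}$ is imposed: the paper selects a solution by hand, arguing that the tensor components $T^{0\cdots 0\mu_1\cdots\mu_n}$ can be taken to vanish unless all input indices are zero, whereas you encode TP as an additional marginal $J_0=\mathds{I}_{\Acal}/d_{\Acal}$ on the full input set and run Theorem~\ref{thm:HermMag} once on the augmented family. Your packaging is arguably tidier, since the compatibility check of $J_0$ against each $J_i$ (which holds precisely because each $\Phi_{\Bcal_i|\Acal_i}$ is TP, so $\Tr_{\Bcal_i}J_i=\mathds{I}_{\Acal_i}/d_{\Acal_i}$) makes explicit the point the paper leaves implicit, namely that the tensor entries already pinned by the given marginals are consistent with the identity-marginal requirement; the disjointness worry you raise at the end is resolved by exactly this observation.
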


\begin{proof}
    Theorem \ref{thm:HermMag} guarantees that there exists a  $(m+n)$-rank tensor $T^{\nu_1\cdots\nu_m\mu_1\cdots \mu_n}$ such that 
    \begin{equation}
J_{\Bcal|\Acal}=\sum_{\mu_i,\nu_j}T^{\nu_1\cdots\nu_m\mu_1\cdots \mu_n} \sigma_{B_1}^{\nu_1}\otimes \cdots \otimes \sigma_{B_m}^{\nu_m} \otimes \sigma_{A_1}^{\mu_1}\otimes \cdots \otimes \sigma_{A_1}^{\mu_n}
    \end{equation}
    is a solution of the $\Herm_1$ state marginal problem $\{J(\Phi_{\Bcal_i|\Acal_i})\}$.
    We only need to show that there exist one $J_{\Bcal|\Acal}$ such that $\Tr_{\Bcal}J_{\Bcal|\Acal}=\mathds{I}_{d_{\Acal}}/d_{\Acal}$.
    This is clear from the fact that when $\nu_1=\cdots=\nu_m=0$, $T^{0\cdots0\mu_1\cdots \mu_n}\neq 0$ only if $\mu_1,\cdots,\mu_n=0$. 
    Since $\Tr J_{\Bcal|\Acal}=1$, $T^{0,\cdots,0}=1/d_{\Bcal}d_{\Acal}$, we arrive at the conclusion.
\end{proof}

\section{Conclusion and discussion}

In this work, we discussed the marginal problem for space-time states and space-time channels.
We show that for space-time states, the solution to the marginal problem almost always exists.
We discuss several applications of this result, including space-time separable marginal problem, space-time symmetric extension, and polygamy of space-time correlations, classical quasi-probability marginal problem.
Via the channel-state duality, we show that the space-time channel marginal problem can be reformulated as a space-time state marginal problem. Thus the result of the space-time marginal problem can be directly applied to the space-time channel marginal problems.
We also introduce an approach to inferring the global space-time state from a given set of reduced space-time states based on the generalized maximum entropy principle.

Despite the significant progress made in understanding space-time states and quantum causality models, several open problems still need to be addressed. One such issue is finding the physical realization of an arbitrary given PDO, which remains a challenge for most proposals of space-time states and quantum causality models. Another critical problem is explaining the polygamy of space-time correlations, which is closely related to the former problem. Although we have shown that almost all space-time marginal problems have solutions in the PDO framework, it is essential to understand the physics behind this phenomenon. One suggestion based on the open time-like curve circuit is given in \cite{marletto2019theoretical}, and we plan to investigate this further in our future studies.

In addition, while we have defined space-time entropy in our framework of space-time states and discussed its properties, a deeper understanding and investigation of the difference and connection between existing proposals for space-time entropy and dynamical entropy is still needed \cite{cotler2018superdensity,lindblad1979non}. Another interesting problem is the application of the maximum entropy principle in quantum causal inference. Although the existing quantum causal inference protocol is mainly based on process matrix \cite{costa2016quantum,Giarmatzi2018a,bai2022quantum}, a possible definition of causal entropy (or space-time entropy) is provided in Ref. \cite{cotler2018superdensity}. Using the maximum causal entropy principle to infer the global causal structure from local causal structures in the process matrix formalism is also of great interest. All of these issues will be left for our future studies.

\acknowledgements{
We acknowledge Fabio Costa, Xiangjing Liu, James Fullwood, and Arthur Parzygnat for bringing our attention to some related works.  Special thanks are extended to Arthur Parzygnat for insightful discussions regarding theorem~\ref{thm:wsubadd}.
Z. J. and D. K. are supported by National Research Foundation in Singapore and A*STAR under its CQT Bridging Grant. M. S. is supported by the National Research Foundation, Singapore, and Agency for Science, Technology and Research (A*STAR) under its QEP2.0 programme (NRF2021-QEP2-02-P06), the Singapore Ministry of Education Tier 1 Grants RG77/22, the Singapore Ministry of Education Tier 2 Grant MOE-T2EP50221-0005, grant no.~FQXi-RFP-1809 (The Role of Quantum Effects in Simplifying Quantum Agents) from the Foundational Questions Institute and Fetzer Franklin Fund (a donor-advised fund of Silicon Valley Community Foundation) 
}


\appendix

\section{Quantum space-time causality and pseudo-density operator formalism}
\label{sec:causality}

\subsection{Pseudo-density operator}

The original PDO is introduced for the qubit system \cite{fitzsimons2015quantum}, when dealing with the higher dimensional system, one needs to embed the system into the space of the many-qubit system and restrict the evolution to the appropriate subspace.
However, here we will take a different approach, we assume that the local space is of arbitrary $d$ dimensions and the measurements are generalized Pauli operators (a.k.a., Hilbert-Schmidt operators) $\sigma_{\mu}$, $\mu=0,\cdots,d^2-1$ which are Hermitian operators satisfying (i) $\sigma_0=\mathds{I}$;  (2) $\Tr( \sigma_{j})=0$ for all $j \geq 1$; (3) These matrices are orthogonal $	\Tr(\sigma_{\mu}\sigma_{\nu})=d\delta{\mu \nu}$.
They form a basis for the real vector space of $d\times d$ Hermitian operators. An explicit example is generalized Gell-Mann matrices (GGM) \cite{Gell-mann1962symmetry} (See \cite[Sec. 2]{wei2022antilinear} for an explicit matrix expression we will use).
When $d=2$, they become Pauli operators. The continuous variable version of PDO is introduced in Ref. \cite{zhang2020different}.
In this work, we only consider the finite-dimensional case.

The pseudo-density operator formalism concerns the following scenario: we have a quantum system distribution over space and we choose to measure some (generalized) Pauli measurements over some qudit ($x$) at some particular instant in time ($t$). 
We introduce a tensor product structure among all space-time events $\mathcal{A}=\{E(x_i,t_i)\}_{i=1}^n$. Thus the total space is $\mathcal{H}_{\Acal}=\otimes_{i}\mathcal{H}[E(x_i,t_i)]$.
In this way, we obtain a state of the system that is distributed over space-time
\begin{equation} \label{eq:PDOnd}
    R_{\mathcal{A}}=\frac{1}{d^n} \sum_{\mu_1,\cdots,\mu_n=0}^{d^2-1}
   T^{\mu_1\cdots \mu_n} \otimes_{j=1}^n \sigma_{\mu_j},
\end{equation}
where $T^{\mu_1\cdots \mu_n}=\langle 
    \{ \sigma_{\mu_j}\}_{j=1}^n\rangle$ is the expectation value of a collection of Pauli measurements.
This $R_{\Acal}$ is called a PDO.
Notice that when all qudits are measured at the same instant of time, we obtain the normal Bloch representation of a multipartite state \cite{wei2022antilinear}. 
We will denote the set of all PDOs for an event set $\Acal$ as $\PDO(\Acal)$.

It's useful to introduce a quantum circuit representation of the causal structure behind the PDO.
See Fig. \ref{fig:PDO} for an illustration.
The input state is a (possibly multipartite) state $\varrho(t_0)$ and we will always denote the time instant for the input state as $t_0$.
Suppose that there are $n$ instants in time that we are concerned with, $t_1,\cdots,t_n$. During every two consecutive instants $t_i$ and $t_{i+1}$, we can apply some quantum operations $\mathcal{E}^{t_i\to t_{i+1}}$ over the state.
The space coordinates are represented by the quantum wire $x_1,\cdots,x_m$, and the event $E(x_i,t_j)$ is  just  measuring (generalized) Pauli operators of $x_i$-state at time instant $t_j$.
For a collection $\Acal=\{E(x,t)\}$ of space-time events, we will obtain a corresponding pseudo-density operator $R_{\Acal}$.
It's crucial that $\mathcal{E}^{t_i\to t_{i+1}}$ has a given structure that describes the propagation of causality over the time interval $[t_i,t_{i+1}]$.
For example of the causal structure as in Fig.~\ref{fig:PDO}, $\mathcal{E}^{t_1\to t_{2}}=\mathcal{E}^{t_1\to t_{2}}_{x_1}\otimes \mathcal{E}^{t_1\to t_{2}}_{x_2x_3x_4}\otimes \mathcal{E}^{t_1\to t_{2}}_{x_5x_6x_7}$, the effect of event $E(x_3,t_1)$ is propagated to event $E(x_3,t_2)$, but it's not propagated to event $E(x_6,t_2)$ due to the existence of tensor product structure of $\mathcal{E}^{t_1\to t_{2}}$.
Actually, during two consecutive time instants, there may exist a complex quantum circuit that characterized the propagation of causality, which is also under extensive investigation \cite{mi2021information}. 
This quantum circuit representation of the PDO is convenient to investigate the transformation of PDOs, which will be rigorously defined and studied later.
A fixed background causal structure has a fixed quantum circuit. The event set is embedded into the space-time structure determined by the circuit.
We will denote the set of PDOs obtained by embeded $\Acal$ into a circuit $\mathsf{C}$ as $\PDO(\Acal,\mathsf{C}[\varrho(t_0),\{\mathcal{E}^{t_i\to t_{i+1}}\}])$.
The probabilistic mixture of PDOs is also allowed, thus $\PDO(\Acal,\mathsf{C}[\varrho(t_0),\{\mathcal{E}^{t_i\to t_{i+1}}\}])$ can be regarded as the convex hull of the PDO obtained from the given circuit.

It turns out that a complete characterization of the set of PDOs for a given set of events is a very complicated problem.
Only for the single-qubit two-event case, the spatial and temporal PDO sets are fully characterized \cite{bengtsson2017geometry,zhao2018geometry}.  
From the definition of a PDO $R$,  we see that it must satisfy \cite{fitzsimons2015quantum}: (i) $R$ is Hermitian; (ii) $R$ is trace-one.
Another natural requirement that PDO must satisfy is that all single-event reduced PDO must be positive semidefinite \cite{horsman2017can}.
For $n$-event set $\Acal=\{E_i\}_{i=1}^n$, each event has its associated Hilbert space $\mathcal{H}_{E_i}$, and the Hilbert space of the whole event set is then given by the tensor product of each Hilbert spaces, i.e $\mathcal{H}_{\Acal} = \otimes_i \mathcal{H}_{E_i}$.
It's convenient to introduce the set of all trace-one Hermitian operators
    \begin{equation}
        \mathbf{Herm}_{1} ({\Acal})=\{R\in \mathbf{B}(\mathcal{H}_{\Acal})|R^{\dagger}=R, \Tr(R)=1\},
    \end{equation}
where $\mathbf{B}(\mathcal{H}_{\Acal})$ denotes the set of all bounded operators over $\mathcal{H}_{\Acal}$, $\mathbf{Herm}$ denotes the set of all Hermitian operators and the subscript denotes the trace of these operators.
It's clear that $\PDO(\Acal)\subset  \mathbf{Herm}_{1} ({\Acal})$.
A subtle thing is that the correlation function should be bounded for fixed settings of measurement choice. 
And for spatial correlations, the positive semidefinite condition needs to be imposed.
Another interesting and closely relevant open question is, for a given PDO, how to find a quantum process to realize it. This also goes beyond the scope of this paper, and we leave it for our future study.

\begin{example}[Two-event PDO]
\label{exp:TwoPDO}
The simplest PDO is the one obtained by measuring two-point correlation functions $\langle \sigma_{\mu_1}(x_1,t_1)\otimes \sigma_{\mu_2}(x_2,t_2)\rangle$ over the (possibly multipartite) qubit state.
There are three causally distinct situations: 
\begin{enumerate}
    \item The spatial two-qubit PDO, this corresponds to the case $t_1=t_2=t$ and $x_1\neq x_2$, 
    \begin{equation}
        \begin{aligned}
		\begin{tikzpicture}
                \draw[line width=.6pt,black] (0.4,-0.4) -- (0.4,0);	
			\draw[line width=.6pt,black] (-0.4,-0.4) -- (-0.4,0);
                \draw[line width=.6pt,black] (0.4,0.5) -- (0.4,0.9);	
			\draw[line width=.6pt,black] (-0.4,0.5) -- (-0.4,0.9);
                \draw[line width=0.6 pt, fill=gray, fill opacity=0.2] 
		(-0.6,0) -- (.6,0) -- (.6,0.5) -- (-.6,0.5) -- cycle; 
			\node[ line width=0.6pt, dashed, draw opacity=0.5] (a) at (0.6,-0.8){$(x_2,t)$};
			\node[ line width=0.6pt, dashed, draw opacity=0.5] (a) at (-0.4,-0.4){$\bullet$};
                \node[ line width=0.6pt, dashed, draw opacity=0.5] (a) at (0.4,-0.4){$\bullet$};
			\node[ line width=0.6pt, dashed, draw opacity=0.5] (a) at (-0.6,-0.8){$(x_1,t)$};
    		\node[ line width=0.6pt, dashed, draw opacity=0.5] (a) at (0,0.25)
      {$\mathcal{E}_{x_1x_2}$};
		\end{tikzpicture}
	\end{aligned}
    \end{equation}
    In this case, $R_{(x_1,t),(x_2,t)}=\varrho_{x_1x_2}(t)$ is positive semidefinite trace-one operator.
    
    \item The temporal two-qubit PDO, this corresponds to the case $x_1=x_2=x$ and $t_1\neq t_2$,
        \begin{equation}
        \begin{aligned}
		\begin{tikzpicture}
                \draw[line width=.6pt,black] (0.4,-0.4) -- (0.4,0);	
			\draw[line width=.6pt,black] (-0.4,-0.4) -- (-0.4,0);
              \draw[line width=.6pt,black]   (0.4,0.5) -- (0.4,0.9);	
			\draw[line width=.6pt,black] (-0.4,0.5) -- (-0.4,0.9);
                \draw[line width=0.6 pt, fill=gray, fill opacity=0.2] 
		(-0.6,0) -- (.6,0) -- (.6,0.5) -- (-.6,0.5) -- cycle; 
			\node[ line width=0.6pt, dashed, draw opacity=0.5] (a) at (-0.4,-0.4){$\bullet$};
                \node[ line width=0.6pt, dashed, draw opacity=0.5] (a) at (-0.4,0.9){$\bullet$};
			\node[ line width=0.6pt, dashed, draw opacity=0.5] (a) at (-0.4,-0.8){$(x,t_1)$};
   	      \node[ line width=0.6pt, dashed, draw opacity=0.5] (a) at (-0.4,1.3){$(x,t_2)$};
    		\node[ line width=0.6pt, dashed, draw opacity=0.5] (a) at (0,0.25)
      {$\mathcal{E}_{x_1x_2}$};
		\end{tikzpicture}
	\end{aligned}
    \end{equation}
In this case,  using the Stinespring extension, we can just consider the general quantum channel $\mathcal{E}_x^{t_1\to t_2}$ acting on $\varrho_{x}(t_1)$. The corresponding PDO is of the form \cite{zhao2018geometry}
\begin{equation}\label{eq:PDOQubit}
    R_{(x,t_1),(x,t_2)}=(\operatorname{id}\otimes \mathcal{E}_x^{t_1\to t_2})(\{\varrho_x(t_1)\otimes \frac{\mathds{I}}{2},\mathsf{SWAP}\}),
\end{equation}
where we have used the anti-commutator bracket and $\mathsf{SWAP}=\sum_{\mu=0}^3\sigma_{\mu}\otimes \sigma_{\mu}/2$.
Another equivalent expression based on Jordan's product of state and Choi matrix of the channel is given in \cite{horsman2017can}.

    \item The hybrid space-time PDO, this corresponds to the case $x_1\neq x_2$ and $t_1\neq t_2$, $R_{(x_1,t_1),(x_2,t_2)}$.
        \begin{equation}
        \begin{aligned}
		\begin{tikzpicture}
                \draw[line width=.6pt,black] (0.4,-0.4) -- (0.4,0);	
			\draw[line width=.6pt,black] (-0.4,-0.4) -- (-0.4,0);
              \draw[line width=.6pt,black] (0.4,0.5) -- (0.4,0.9);	
			\draw[line width=.6pt,black] (-0.4,0.5) -- (-0.4,0.9);
                \draw[line width=0.6 pt, fill=gray, fill opacity=0.2] 
		(-0.6,0) -- (.6,0) -- (.6,0.5) -- (-.6,0.5) -- cycle; 
			\node[ line width=0.6pt, dashed, draw opacity=0.5] (a) at (0.6,1.3){$(x_2,t_2)$};
			\node[ line width=0.6pt, dashed, draw opacity=0.5] (a) at (-0.4,-0.4){$\bullet$};
                \node[ line width=0.6pt, dashed, draw opacity=0.5] (a) at (0.4,0.9){$\bullet$};
			\node[ line width=0.6pt, dashed, draw opacity=0.5] (a) at (-0.6,-0.8){$(x_1,t_1)$};
    		\node[ line width=0.6pt, dashed, draw opacity=0.5] (a) at (0,0.25)
      {$\mathcal{E}_{x_1x_2}$};
		\end{tikzpicture}
	\end{aligned}
    \end{equation}
    See Ref. \cite{liu2023quantum} for a general expression of PDO for a given quantum circuit.
\end{enumerate}
\end{example}

The most spatially correlated two-event PDOs are the well-known Bell states \cite{Horodecki2009}, e.g. singlet state $\psi^-$, 
\begin{equation}\label{eq:singlet}
   R_{s}=|\psi^-\rangle \langle \psi^-|=\frac{1}{4}(\one \otimes \one-X\otimes X-Y\otimes Y -Z\otimes Z).
\end{equation}
The strongest temporally correlated two-event PDOs are arguably the ones obtained from by measuring a given state for two consecutive time instants, like
\begin{equation}\label{eq:temporalBell}
    R_t=\frac{1}{4}(\one \otimes \one+X\otimes X+Y\otimes Y +Z\otimes Z),
\end{equation}
which has been used to implement quantum teleportation in time \cite{marletto2021temporal}.
When taking the partial trace over one of two events for both $R_s$ and $R_t$, we will obtain the single qubit maximally mixed state.
In the spatial case, $R_s$ is known as the maximally entangled state, thus $R_t$ can be regarded as a maximally entangled temporal state in a similar spirit.

A negative eigenvalue of $R$ signifies that the causal structure is not purely spatial, that is, some temporal causal mechanisms are embodied.
This implies that a big difference between temporal and spatial PDO is that spatial PDO can be pure (rank of $R$ could be one), but temporally correlated PDO can not be a pure state.

\subsection{Quasi-probabilistic mixture of  space-time product states}
\label{sec:SepRep}
From the definition of a PDO $R$, we know that its eigenvalues $\vec{\lambda}(R)$ are real (possibly negative) numbers such that $\sum_i \lambda_i=1$, , that is, $R$ can be written as 
\begin{equation}\label{eq:Reigen}
    R=\sum_i \lambda_i |\psi_i\rangle \langle \psi_i|,
\end{equation}
with eigenvectors $\psi_i$ corresponding to $\lambda_i$.
This means that the spectrum of a PDO can be regarded as a quasi-probability distribution, which has been investigated from aspects since Wigner's pioneering work \cite{Wigner1932on} and turns out to play a crucial role in quantum foundations \cite{Spekkens2008negativity,ferrie2011quasi}, quantum optics \cite{scully1987quasi}, quantum computation \cite{Delfosse2015wigner}, etc.
Utilizing the information-theoretic tools developed in quasi-probability distribution to investigate the properties of PDOs is also interesting, which will be done later.

Due to the possibility of the existence of negative eigenvalues, it appears that the eigenvalue of PDO $R$ may not be bounded.
However, this is indeed not the case. 
Notice that $\Tr \sigma_{\mu}^2=d$ for all $\mu$, the sup-norm satisfy $\|\sigma_{\mu}\|_{\rm sup}\leq \sqrt{d}$.
This further implies that $\|\sigma_{\mu_1}\otimes \cdots \otimes \sigma_{\mu_n}\|_{\rm sup} \leq d^{n/2}$.
Since $T^{\mu_1,\cdots,\mu_n}$'s are correlation functions of $\{\sigma_{\mu_1},\cdots,\sigma_{\mu_n}\}$, we also have $|T^{\mu_1\cdots \mu_n}|\leq d^{n/2}$.
Then, we  see
\begin{equation}
    \begin{aligned}
        \|R_{\Acal}\|_{\rm sup}&\leq \frac{1}{d^n} \sum |T^{\mu_1\cdots \mu_n}|\|\sigma_{\mu_1}\otimes \cdots \otimes \sigma_{\mu_n}\|_{\sup}
        &\leq d^n.
    \end{aligned}
\end{equation}

We would like to stress that the physical interpretation of the spectrum for a general temporally correlated PDO is still lacking.
Here we will propose one possible interpretation based on the following theorem.

\begin{theorem}[Quasi-probability separable expansion]
	\label{thm:QuasiRep}
    Consider two-event set $\Acal$, any PDO $R_{\Acal}\in \PDO(\Acal)$ can be represented by a quasi-probabilistic mixture of product states. Namely, there exists a quasi-probability distribution $P(a,b)$ and states $|a,b\rangle=|a\rangle \otimes |b\rangle$ such that 
    \begin{equation}
        R_{\Acal}=\sum_{a,b} P(a,b) |a,b\rangle\langle a,b|.
    \end{equation}
\end{theorem}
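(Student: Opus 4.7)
The plan is to exploit the affine structure of $\Herm_1$ together with the fact that the rank-one projectors span the real vector space of Hermitian operators on each local factor, and then simply translate the Pauli expansion of $R_\Acal$ into a product-projector expansion with real, possibly negative, weights that automatically sum to one.

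First I would fix, for each local Hilbert space $\mathcal{H}_{E_i}$ ($i=A,B$), an informationally complete set of rank-one projectors $\{|a\rangle\langle a|\}_{a}$ that forms a basis of the real vector space of Hermitian operators on $\mathcal{H}_{E_i}$. Such a frame exists in every finite dimension $d$ (for example, any SIC-POVM, or a union of a mutually unbiased bases, supplies $d^2$ linearly independent rank-one projectors). For each generalized Pauli $\sigma_\mu$ I can therefore write $\sigma_\mu = \sum_a c_\mu(a)\, |a\rangle\langle a|$ with real coefficients $c_\mu(a)\in\mathbb{R}$, the signs being forced by the tracelessness of $\sigma_\mu$ for $\mu\neq 0$.

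Next I would insert these expansions into the two-event PDO representation
\begin{equation}
R_\Acal = \frac{1}{d^2}\sum_{\mu,\nu} T^{\mu\nu}\, \sigma_\mu\otimes\sigma_\nu,
\end{equation}
swap the order of summation, and define
\begin{equation}
P(a,b) \;=\; \frac{1}{d^2}\sum_{\mu,\nu} T^{\mu\nu}\, c_\mu(a)\, c_\nu(b),
\end{equation}
so that
\begin{equation}
R_\Acal = \sum_{a,b} P(a,b)\, |a\rangle\langle a|\otimes |b\rangle\langle b|.
\end{equation}
Each $P(a,b)$ is real because the $T^{\mu\nu}$ and the $c_\mu(a)$ are real; it need not be nonnegative, which is precisely what a quasi-probability distribution allows. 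Normalization $\sum_{a,b}P(a,b)=1$ then follows by taking the trace of both sides and using $\Tr R_\Acal = 1$ together with $\Tr(|a\rangle\langle a|\otimes|b\rangle\langle b|)=1$.

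The only genuinely nontrivial step is the first one, namely producing a rank-one projector basis of the local Hermitian space; the rest is linear algebra. Note that once this is granted, the argument generalizes verbatim to any $n$-event set by using the product basis $\{\otimes_i |a_i\rangle\langle a_i|\}$, which gives the general corollary used later (\emph{e.g.}, in the separable marginal problem) without extra work. Because the decomposition is expressed in an overcomplete basis when $d>1$, the resulting $P(a,b)$ is not unique, and one could, if desired, minimize a norm such as $\sum_{a,b}|P(a,b)|$ to pin down a canonical representative; this matches the intuition conveyed by Fig.~\ref{fig:Sep}, where $R$ is exhibited as an affine combination $\eta W_1+(1-\eta)W_2$ of bona fide separable states with $\eta\in\mathbb{R}$.
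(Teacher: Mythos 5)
Your proof is correct, but it takes a genuinely different route from the paper's. The paper first diagonalizes $R_{\Acal}=\sum_i\lambda_i|\psi_i\rangle\langle\psi_i|$ and then invokes the fact (borrowed from the robustness-of-entanglement literature and illustrated in Fig.~\ref{fig:Sep}) that each bipartite pure projector $|\psi_i\rangle\langle\psi_i|$ admits a quasi-probability product decomposition, multiplying the two layers of quasi-probabilities together; extending to $n$ events then requires a separate corollary with iterated bipartitions. You instead give a purely linear-algebraic argument: expand each local Hermitian space in a basis of rank-one projectors, push the Pauli expansion of $R_{\Acal}$ through it, and read off $P(a,b)$, with normalization from the trace. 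Your route is more elementary and self-contained (it never uses the spectral theorem or the pure-state lemma, only Hermiticity and $\Tr R_{\Acal}=1$, consistent with the paper's remark that the result holds for all of $\Herm_1$), and it generalizes to $n$ events verbatim, whereas the paper's proof is more structural and directly produces the picture $R=\eta W_1+(1-\eta)W_2$ of Fig.~\ref{fig:Sep}. One small caution: your cited examples of rank-one Hermitian bases are not all guaranteed to exist in every dimension (SIC-POVM existence is Zauner's conjecture; complete MUB sets are only known for prime-power $d$), but the claim you actually need is unconditionally true via the standard basis $\{|i\rangle\langle i|\}\cup\{\frac{1}{2}(|i\rangle+|j\rangle)(\langle i|+\langle j|)\}\cup\{\frac{1}{2}(|i\rangle+\mathrm{i}|j\rangle)(\langle i|-\mathrm{i}\langle j|)\}$, so this is a cosmetic fix rather than a gap; you could also skip the Pauli expansion entirely and expand $R_{\Acal}$ directly in the product projector basis.
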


\begin{proof}
 Recall that any bipartite pure state $|\psi\rangle$ can be written as 
 \begin{equation}
     |\psi\rangle \langle \psi|=\sum_{a,b} \eta(a,b) |a,b\rangle \langle a,b|,
 \end{equation}
 where $\eta(a,b)$ is a quasi-probability distribution and $|a,b\rangle =|a\rangle \otimes |b\rangle$.
See Fig.~\ref{fig:Sep} for an illustration (also see \cite{Vidal1999robustness}).
 From Eq.~\eqref{eq:Reigen}, for $\lambda_i$ is a quasi-probability distribution, and each $\psi_i$ gives a corresponding quasi-probability distribution $\eta_{i}(a_i,b_i)$. 
 The function $P(a_i,b_i)=\lambda_i \eta_{i}(a_i,b_i)$ is a quasi-probability distribution, thus we obtain a quasi-probability separable expansion of $R_{\Acal}$.
\end{proof}

\begin{corollary}
\label{Coro:SepRep}
	For any $n$-event set $\Acal$, any PDO $R_{\Acal}$ can be expressed as a quasi-probabilistic mixture of  pure space-time product states
\begin{equation}\label{eq:RWdecom}
R_{\Acal}=\sum_{a_1,\cdots,a_n}p(a_1,\cdots,a_n)|a_1,\cdots,a_n\rangle \langle a_1,\cdots, a_n|,
\end{equation}
	where $|a_1,\cdots,a_n\rangle =|a_1\rangle \otimes \cdots \otimes |a_n\rangle$.
\end{corollary}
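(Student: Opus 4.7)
The plan is to reduce the $n$-event statement to the two-event result of Theorem~\ref{thm:QuasiRep} by an induction on the number of events, bootstrapping through the spectral decomposition already available for any $R\in\PDO(\Acal)$.

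First, I would use Eq.~\eqref{eq:Reigen} to write $R_{\Acal}=\sum_i \lambda_i |\psi_i\rangle\langle\psi_i|$, where the real spectrum $\{\lambda_i\}$ sums to one and therefore constitutes a quasi-probability distribution over an index labeling the $n$-partite pure eigenstates $|\psi_i\rangle\in\bigotimes_{k=1}^n\mathcal{H}_{E_k}$. It then suffices to prove the following claim by induction on $n$: every $n$-partite pure state projector $|\psi\rangle\langle\psi|$ admits a quasi-probability separable expansion
\begin{equation}
|\psi\rangle\langle\psi|=\sum_{a_1,\dots,a_n}\eta(a_1,\dots,a_n)\,|a_1,\dots,a_n\rangle\langle a_1,\dots,a_n|,
\end{equation}
with $\sum \eta=1$. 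Because forming such a superposition of quasi-probability separable operators with weights $\lambda_i$ produces another operator of the same form (with joint weights $p=\lambda_i\eta_i$ summing to $\sum_i\lambda_i\cdot 1 = 1$), the corollary follows at once.

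For the induction, the base case $n=2$ is exactly Theorem~\ref{thm:QuasiRep}. For the inductive step, I would bipartition the $n$-event system as $E_1\mid E_2\cdots E_n$ and treat $|\psi\rangle\langle\psi|$ as a bipartite pure state on $\mathcal{H}_{E_1}\otimes\mathcal{H}_{E_2\cdots E_n}$. Theorem~\ref{thm:QuasiRep} then furnishes a quasi-probability distribution $\mu(a_1,b)$ together with local states $|a_1\rangle$ on $\mathcal{H}_{E_1}$ and $|b\rangle$ on $\mathcal{H}_{E_2\cdots E_n}$ such that $|\psi\rangle\langle\psi|=\sum_{a_1,b}\mu(a_1,b)|a_1\rangle\langle a_1|\otimes|b\rangle\langle b|$. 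Each $|b\rangle$ is itself pure on the $(n-1)$-event subsystem, so the inductive hypothesis gives $|b\rangle\langle b|=\sum_{a_2,\dots,a_n}\nu_b(a_2,\dots,a_n)\bigotimes_{k=2}^n|a_k\rangle\langle a_k|$ with $\sum \nu_b = 1$. Substituting this in and absorbing the summation over $b$ into a joint distribution $\eta(a_1,\dots,a_n)=\sum_b \mu(a_1,b)\nu_b(a_2,\dots,a_n)$ produces the desired pure-product-state quasi-probabilistic expansion; normalization is immediate because $\sum_{a_1,\dots,a_n}\eta = \sum_{a_1,b}\mu(a_1,b)\sum_{a_2,\dots,a_n}\nu_b = \sum_{a_1,b}\mu(a_1,b)=1$.

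The only delicate points I anticipate are bookkeeping rather than substantive obstacles: confirming that the different local bases $\{|a_k\rangle\}$ at each event can be chosen consistently (they can, since Theorem~\ref{thm:QuasiRep} allows arbitrary local bases at each step, and the recursion at level $k$ only commits us to the basis for event $E_k$), and verifying that nothing in the argument requires positivity of the coefficients. Since both $\mu$ and each $\nu_b$ are genuine quasi-probability distributions, their product-and-sum combination is again a quasi-probability distribution on the full index set, which is all that Eq.~\eqref{eq:RWdecom} asks for.
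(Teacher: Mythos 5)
Your proposal is correct and follows essentially the same route as the paper: decompose $R_{\Acal}$ spectrally via Eq.~\eqref{eq:Reigen} into a quasi-probabilistic mixture of pure states, then repeatedly apply the bipartite quasi-probability separable expansion of Theorem~\ref{thm:QuasiRep} to each pure projector and absorb the nested quasi-probabilities into a single joint distribution. The only (immaterial) difference is that you peel off one event per step, $E_1\mid E_2\cdots E_n$, whereas the paper recursively halves the bipartitions; your write-up is somewhat more explicit about the induction and the normalization bookkeeping.
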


\begin{proof}
This can be proved via repeatedly taking bipartition of the event set and using the theorem~\ref{thm:QuasiRep}.
More precisely, if we take the bipartition of the event set as $A|B$, then from Eq.~\eqref{eq:Reigen} and theorem  \ref{thm:QuasiRep}, $|\psi_i\rangle \langle \psi_i|=\sum_{a,b}\eta(a,b)|\phi_a\rangle \langle \phi_a|\otimes |\xi_b\rangle\langle\xi_b|$. We then take bipartition of $A=A_1|A_2$ and $B=B_1|B_2$, $|\phi_a\rangle \langle \phi_a|$ and $|\xi_b\rangle\langle\xi_b|$ can decompose into quasi-probability mixtures. By repeating this procedure, we will obtain the required expression.
\end{proof}

\begin{figure}[b]
	\centering
	\includegraphics[width=7.5cm]{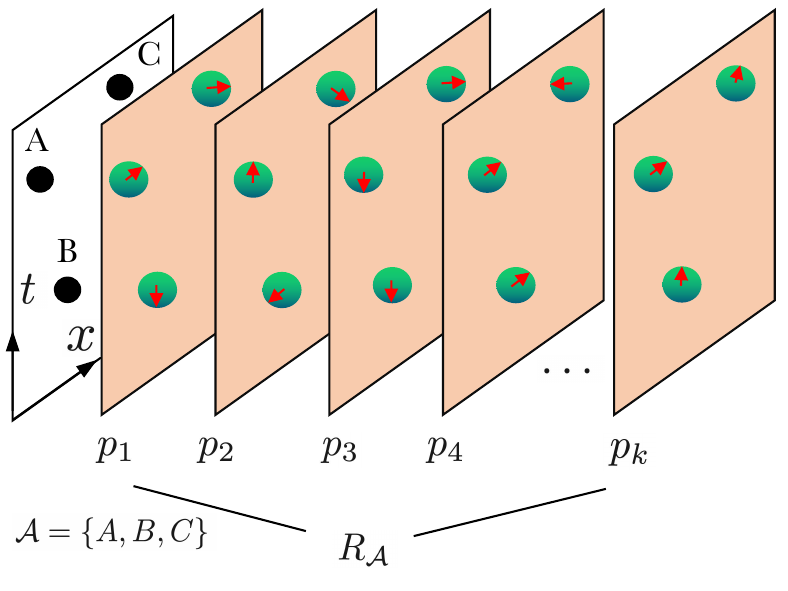}
	\caption{The illustration of the quasi-probabilistic mixture interpretation of PDO. The event set $\Acal=\{A,B,C\}$, and $\vec{p}=(p_1,\cdots,p_k)$ is a quasi-probability vector. The PDO is a quasi-probabilistic mixture of product space-time states $R_{\Acal}=\sum_{i=1}^k p_i|a_i,b_i,c_i\rangle \langle a_i,b_i,c_i|$.
	\label{fig:PDORep}}
\end{figure}

Actually, the above results hold for arbitrary trace-one Hermitian operators, since in the proof we only use the trace-one condition and Hermiticity of $R$.
It's clear that this quasi-probabilistic decomposition is not unique. The concept of hidden negativity, which refers to the minimum negativity of the quasi-probability distribution utilized for representing a PDO as depicted in Eq.~\eqref{eq:RWdecom}, will be elaborated on extensively in our forthcoming work \cite{jia2023space}.
Inspired by the above results, we can introduce a more general formalism for space-time correlations.

\begin{definition}[Quasi-probabilistic mixture representation of space-time correlation]
\label{def:W}
Consider an $n$-event space-time scenario $\Acal=\{E_1,\cdots,E_n\}$, we still assign a local Hilbert space $\mathcal{H}_{E_i}$ for each event $E_i$. The local state vectors are independent, viz., they are in product-form $|a_1,\cdots,a_n\rangle=|a_1\rangle \otimes \cdots\otimes|a_n\rangle$. The correlations are captured by the negativity of quasi-probability distribution $\vec{p}=(p_1,\cdots,p_n)$,
\begin{equation} \label{eq:WA}
	W_{\Acal}=\sum_{i=1}^kp(a_1,\cdots,a_n) |a_1,\cdots,a_n\rangle \langle a_1,\cdots,a_n|.
\end{equation}
See Fig.~\ref{fig:PDORep} for an illustration.
\end{definition}

This quasi-probabilistic mixture representation of space-time correlation is of their own interest and we will discuss it in detail elsewhere \cite{jia2023space}.
The above result shows that the PDO formalism can be subsumed into this more general formalism.
When $\vec{p}$ is a probability vector, there is no quantum space-times correlation in $W_{\Acal}$. However, when there exist negative probabilities, there must be quantum space-time correlations.
Hereinafter, in the most general setting, we will call a matrix $W$ a space-time state if: (i) $W$ is Hermitian; (ii) $\Tr W=1$; and (iii) for any fixed event set, $\|W\|_{\rm sup}$ is upper bounded.
Any space-time state can be expressed as in Eq.~\eqref{eq:WA}.

\subsection{Space-time purification}
\label{sec:prurification}

For a PDO $R_{\Acal}$, due to the existence of negativity, it's impossible to purify in the usual way.
Nevertheless, we can still remedy this issue by introducing a more general form of purification, which is named space-time purification.

For a PDO $R_{\Acal}$, we have polar decomposition $R_{\Acal}=U_{\Acal}|R_{\Acal}|$, where $|R_{\Acal}|=\sqrt{R_{\Acal}^{\dagger}R_{\Acal}}$.
Then we can purify $|R_{\Acal}|$ via 
\begin{equation}
    |\Psi_{\Acal\Bcal}\rangle=\sum_{i}\sqrt{|\lambda_i|} |\psi_i\rangle \otimes |e_i\rangle,
\end{equation}
where $|e_i\rangle$'s are the orthonormal basis for the Hilbert space of an auxiliary system $\Bcal$.
The PDO $R_{\Acal}$ can be expressed as 
\begin{equation}
    R_{\Acal}=U_{\Acal} \Tr_{\Bcal}  |\Psi_{\Acal\Bcal}\rangle\langle \Psi_{\Acal\Bcal}|. 
\end{equation}
The main difference between the space-time purification with that of the mixed density operator is $\| \Psi_{\Acal\Bcal}\|\geq 1$. 
If $\| \Psi_{\Acal\Bcal}\|> 1$, there must be temporal correlations in $R_{\Acal}$.

\section{Quantum pseudo-channel}
\label{sec:QPC}
As we have seen, the PDO codifies the space-time correlations of a given event set. It's natural to consider the transformation among these PDOs, this naturally leads to the concept of quantum pseudo-channel (QPC). QPC can thus be regarded as space-time channels, this has not been discussed before. 
The only work we are aware of is \cite{cotler2018superdensity}, where the concept of space-time channel is briefly discussed in the superdensity operator formalism.
Since PDO formalism is completely different from that of superdensity operator, in superdensity operator formalism, the state is still positive semidefinite.
It's thus worth to discussing the definition and representation of QPCs in reasonable detail.

\subsection{Quantum pseudo-channel as higher-order maps}

In a straightforward way, we define QPC as a linear superoperator that maps pseudo-density operators to pseudo-density operators. 
All quantum channel is a special case of QPC, where the input and output state are both spatial density operators.

\begin{definition}[QPC]
Consider the space of all bounded operators over the Hilbert space $\mathcal{H}_{\Acal_X}=(\mathbb{C}^d)^{\otimes n_X}$ with $X=I, O$ (`in' and `out'), a pseudo-density channel is a linear map $\Phi:\mathbf{B}(\mathcal{H}_{\Acal_I}) \to \mathbf{B}(\mathcal{H}_{\Acal_O})$ such that $\Phi(R_{\Acal_I})\in \mathbf{PDO}({\Acal_O})$ for all $R_{\Acal_I}\in \mathbf{PDO}({\Acal_I})$, \emph{viz}.,
it maps PDO to PDO. We denote the corresponding set of QPC as $\QPC(\Acal_I,\Acal_O)$.
\end{definition}

The above definition of QPC can naturally be generalized to space-time states, which we will call space-time channels.
From the definition of a QPC $\Phi$, we see that $\Phi$ must satisfy: (i) it's Hermiticity-preserving (HP); (ii) it's trace-preserving (TP). There should also be some other constraints, e.g. the boundedness condition for PDO, and every physically realizable PDO must be mapped to a physically realizable PDO. However, the characterization of the set of physically realizable PDOs is still an open problem. At this stage, we will ignore these subtle issues and focus on general properties that QPCs must satisfy. The set of all HPTP maps will be denoted as $\mathbf{HPTP}(\Acal_I,\Acal_O)$. It's clear that $\QPC(\Acal_I,\Acal_O)\subset \mathbf{HPTP}(\Acal_I,\Acal_O)$.

We now introduce several different representations of QPC that will be useful for our later discussion.
Consider a superoperator $\Phi:\mathbf{B}(\mathcal{H}_{\Acal_I}) \to \mathbf{B}(\mathcal{H}_{\Acal_O})$, we have the following representations
\begin{enumerate}
\item The natural representation $N(\Phi)$. Using the vector map $||i\rangle \langle j| \rrangle=|i\rangle |j\rangle$, we define $N(\Phi):|R\rrangle\mapsto |\Phi(R)\rrangle$.
    
\item The Choi-Jamio{\l}kowski representation $J(\Phi)$. Let $E_{ij}=|i\rangle \langle j|$, $J(\Phi) 
	= \sum_{i,j} \Phi(E_{ij}) \otimes E_{ij}$.

 \item Kraus operator-sum representation $\Phi(R)=\sum_a A_a R B_a^{\dagger}$, where $A_a,B_a\in \mathbf{B}(\mathcal{H}_{\Acal_I},\mathcal{H}_{\Acal_)})$ for all $a$.

\item Stinespring representations $\Phi(R)=\Tr_{\mathcal{X}}(A RB^{\dagger})$, where $A,B\in \mathbf{B}(\mathcal{H}_{\Acal_I},\mathcal{H}_{\Acal_O} \otimes \mathcal{X})$, and $\mathcal{X}$ is an auxiliary space.
\end{enumerate}
In each of the above representations, there have  been well-established theories of HP and TP, see, e.g. \cite{hill1973linear,dePillis1967linear,choi1975completely}.
In Kraus operator-sum representation, an HPTP map is of the form
\begin{equation}
    \Phi(R)=\sum_a \lambda_a A_a R A_a^{\dagger},
\end{equation}
where $\lambda_a$ are real (possibly negative) numbers and $A_a$ satisfy $\sum_a \lambda_a A_a^{\dagger}A_a=\mathds{I}$.
The quantum channels (CPTP maps) are special cases of the QPC, for which we must have $\lambda_a \geq 0$.
Using the relation between Kraus representation and Choi-Jamio{\l}kowski representation, we obtain 
\begin{equation}
    J(\Phi)=\sum_a \lambda_a |A_{a}\rrangle \llangle A_a|,
\end{equation}
since $\lambda_a$ is in general not non-negative, we see that $J(\Phi)$ is not positive semidefinite but only Hermitian. And from TP condition, we have $\Tr_{\Acal_O} J(\Phi)=\mathds{I}$.
The Stinespring representation could also be obtained by setting $A=\sum_a \lambda_a A_a\otimes e_a$ and $B=\sum_a A_a\otimes e_a$ with $e_a$ an auxiliary orthonormal basis. The TP condition results in $A^{\dagger}B=\mathds{I}$.

Many properties of spatial quantum channels can be generalized to QPC.
These properties are crucial for us to understand the space-time correlations in a unified framework and may also have potential applications in quantum information processing in both space and time settings. Here we give an example of no-cloning theorem of space-time states:
There is no QPC that can perfectly clone an arbitrary given PDO. 
Suppose that there is a QPC $\Phi$ such that for all $R\in \PDO$ we have $\Phi(R)=R\otimes R$.
Consider two PDOs $R_1,R_2$ and their probabilistic mixture $R=pR_1+(1-p)R_2$, acting $\Phi$ on both sides, we will obtain a contradiction.
This is a direct result of the linearity of QPC.

Notice that this means that not just spatially distributed density operators cannot be cloned arbitrarily, but neither the temporally distributed state cannot be cloned arbitrarily.

The above definition of QPC is general but difficult to handle.
Let's give an example via the quantum circuit representation of PDOs.

In a most naive way, the classical deterministic causal structure for a given set of events $\mathcal{A}=\{E_1,\cdots, E_n\}$ is determined by the spacetime coordinates of these events. 
For two events $E_i,E_j$, depending on their spacetime coordinates, there is a corresponding causal relation between them.
If $E_j$ is in the light-cone of $E_i$,  there is a partial order: (i) $E_j\preceq E_i$ when $E_j$ in the past of $E_i$; (ii) $E_j\succeq E_i$ when  $E_j$ in the future of $E_i$. Otherwise, there is no order relation between them.
This equipped the event set $\mathcal{A}$ with a partial order relation $R(\mathcal{A})\subseteq \mathcal{A}\times \mathcal{A}$, which satisfy: $E_i\preceq E_i$; $E_i\preceq E_j$ and $E_j\preceq E_i$ implies $E_i=E_j$; $E_i\preceq E_j$ and $E_j\preceq E_k$ implies $E_i\preceq E_k$.
The causal relation $R(\mathcal{A})$ can be represented by a directed graph with each event represented by a vertex and each causal relation pair represented by a directed edge.
In abstract language, $\mathcal{A}$ is a vertex set and $R(\mathcal{A})$ is the edge set.

Consider two event sets $\Acal$ and $\Bcal$ with their respective causal relations $R(\Acal)$ and $R(\Bcal)$, a cause-effect preserving map $f:\Acal \to \Bcal$ is the one that preserves the causal order, i.e., if $E_i\preceq E_j$, then $f(E_i)\preceq f(E_j)$.
A cause-effect preserving QPC attached to a classical cause-effect preserving map $f:\Acal\to \Bcal$ is defined as follows.
We embed $\Acal$ and $\Bcal$ into two quantum circuits, then we assign a QPC that maps $R_{\Acal}$ to $R_{\Bcal}$.
Consider a circuit realization of a PDO with initial state $\rho(t_0)$, the quantum operations $\{\mathcal{E}^{t_i\to t_{i+1}}\}$. The QPC can be realized as a higher-order map in this situation, namely, a collection of maps of quantum operations $\Phi^{t_i\to t_{i+1}}(\mathcal{E}^{t_i\to t_{i+1}}) ={\mathcal{E}'}^{t_i\to t_{i+1}}$.

\subsection{Space-time Lindbladian and symmetry}

The previous discussion of QPC mainly focused on the transformation perspective of PDO. We could also treat these QPC as a dynamic process of PDOs, this leads to the conception of Lindbladian (or quantum Liouvillian) for a PDO.
Suppose that the event set $\Acal$ is controlled by some parameter $\tau$, then the corresponding PDO $R_{\Acal}(\tau)$ also depends on this parameter.
The dynamics of the PDO thus can be written as 
\begin{equation}
	\frac{d}{d\tau}R_{\Acal}(\tau)=\mathcal{L}(R_{\Acal}(\tau)).
\end{equation}
The detailed derivation of the above equation will be omitted here, it's in a spirit similar to the one for a spatially correlated system.
The space-time steady state  $R_{\Acal}(\infty)$ is defined as the solution of equation $\frac{d}{d\tau}R_{\Acal}(\tau)=0$, which is equivalent to $\mathcal{L} (R_{\Acal}(\infty))=0$.

\begin{definition}[Symmetries of PDO]
Consider a collection of PDOs $\mathcal{R}=\{R_{\Acal_1},\cdots,R_{\Acal_n}\}$, a $G$-symmetry of $\mathcal{R}$ is a group $G$ equipped with a representation for each $i$, $g\mapsto \Phi_g^{i}\in\mathbf{QPC}$ such that $\Phi_g^i(R_{\Acal_i})=R_{\Acal_i}$ for all $g\in G$.
\end{definition}

\begin{remark}
    In \cite{wei2022antilinear} the antilinear quantum channels are investigated, which are crucial for describing the discrete symmetries of an open quantum system and characterizing the quantum entanglement of the mixed quantum state. 
    For the pseudo-density operator, we can also introduce the antilinear quantum pseudo-channel.
\end{remark}

\section{The neural network approach to inferring the global space-time state}

In this part, let's introduce the neural network representation of space-time states and explain how to use it to solve marginal problems.
The neural network representation of quantum many-body states and density operators is a very powerful tool in solving various physical problems \cite{jia2019quantum,Carleo2019machine}.
Since a PDO is a generalization of a density operator, it's natural for us to consider the neural network representation of PDOs.
Unlike the density operator case, where we use a neural network to represent the matrix entries or coefficients of the purified states, here we will use a neural network to describe the correlation function.
This may be of independent interest for solving open-system problems.

\begin{figure}[t]
	\centering
	\includegraphics[width=6cm]{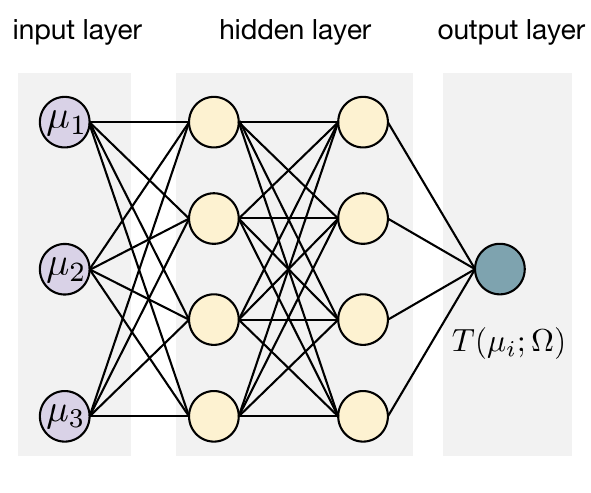}
	\caption{Illustration of a feedforward neural network representation of PDO.
	\label{fig:NNPDO}}
\end{figure}

Consider the PDO given in Eq.~\ref{eq:PDOnd}, we regard $T^{\mu_1\cdots \mu_n}$ as an $n$-variable function $T(\mu_1,\cdots,\mu_n)$.
The Hermicity is encoded in the realness of this function, and the trace-one condition is encoded in the $T^{0,\cdots 0}=1$. 
In order to simplify the discussion, hereinafter we will focus on the qubit PDO.
To represent $T^{\mu_1\cdots \mu_n}$, we build a neural network with $n$ visible neurons, where each visible neural represents $\mu_j$.
The neural network parameters, like connection weights, and biases are denoted as $\Omega=\{w_{ij},b_j\}$. 
For each given value of neural network parameters, we obtain corresponding PDO with correlation function given by $T_{NN}(\mu_1,\cdots,\mu_n;\Omega)$.
To ensure that $T^{0\cdots 0}=1$, we can normalize the function with $T^{\mu_1\cdots \mu_n}=T_{NN}(\mu_1,\cdots,\mu_n;\Omega)/T_{NN}(0,\cdots,0;\Omega)$.
To make it clearer, let's take a feedforward neural network  as an example (see Fig.~\ref{fig:NNPDO}).
Each neuron has several inputs $x_i$ with the corresponding weights $w_i$, there is bias $b$ and an activation function $f$ associated to the neuron, thus the output is 
\begin{equation}
    y=f(\sum_i w_ix_i-b).
\end{equation}
Using this basic building block, we can build a network, which consists of three different layers: input layer, hidden layer, and output layer, as shown in Fig.~\ref{fig:NNPDO}.
There are many different activation functions to be chosen from, here for qubit PDO, we can simply choose a function whose range is $[-1,1]$.
A frequently used one is $\operatorname{tanh}(x)=\frac{e^{x}-e^{-x}}{e^{x}+e^{-x}}$.
For each given set of weights and biases, the neural network outputs a function $T(\mu_j;\Omega)$. Then we use this output to write down a PDO $R(\Omega)$ which depends on the neural network parameters.
Thus the neural network PDO can be regarded as a variational space-time state.

To apply the neural network representation of PDO to the marginal problem, we need to maximize the Lagrangian functional ${L}(R(\Omega))$ (or equivalently, minimize $-{L}(R(\Omega))$) over neural network parameters $\Omega$.
This can be solved by the gradient descent method.
In this way, powerful machine-learning techniques can be applied to solve the problem of space-time correlations, not only for the marginal problem but also for many other problems, like determining the $k$-genuine space-time correlations, solving the steady state for a given Lindbladian, etc.

It's also worth mentioning that we use feedforward neural network states to build PDO, many other neural networks can also be used for representing PDO, like convolutional neural networks, Boltzmann machine, and so on.
The physical properties are encoded in the neural network structures of the representation.
The applications of the neural network approach in this direction are largely unexplored, this will be left for our future study.


%

\end{document}